\documentclass[10pt, twocolumn, x11names]{article}
\usepackage[a4paper,top=2.5cm,bottom=2.5cm,left=1.675cm,right=1.675cm]{geometry}

\usepackage[T1]{fontenc}

\usepackage{fancyhdr}
\usepackage[svgnames]{xcolor}

\usepackage[inline]{enumitem}
\usepackage{multirow,booktabs}

\usepackage{times}

\usepackage[hang,small,labelfont=bf,up,textfont=it,up]{caption}
\usepackage{subcaption}
\usepackage{amsmath,amsfonts,amsthm,amssymb,dsfont}	
\setlist{nolistsep} 

\usepackage{algorithmic}
\usepackage{siunitx}
\usepackage{bm,bbm}
\usepackage{mathtools}
\mathtoolsset{showonlyrefs}
\usepackage[colorlinks=true,allcolors=blue]{hyperref}

\usepackage{graphicx}

\usepackage{numprint}
\npthousandsep{,}\npthousandthpartsep{}\npdecimalsign{.}

\newcommand{\x}{x}
\newcommand{\abs}[1]{\vert#1\vert}
\newcommand{\norm}[1]{\|#1\|}

\newcommand{\bDelta}{\boldsymbol{\Delta}_m}
\newcommand{\bDeltahat}{\hat{\boldsymbol{\Delta}}_{m,n}}
\newcommand{\tti}{2 \to \infty}
\newcommand{\Wmat}{\mathbf{W}_{m,n}}

\newcommand{\Bmat}{\mathbf{B}_m}
\newcommand{\Bmathat}{\hat{\mathbf{B}}_{m,n}}

\newcommand{\Psimat}{\mathbf{\Psi}_m}
\newcommand{\Psimathat}{\hat{\mathbf{\Psi}}_{m,n}}

\newcommand{\hatx}{\hat x_{m,n}}

\newtheorem{theorem}{Theorem}

\theoremstyle{plain}
\newtheorem{definition}{Definition}

\newtheorem{proposition}{Proposition}

\newtheorem{assumption}{Assumption}

\newcommand{\titledoc}{
Recovering manifold structure in LLM responses through a joint Euclidean mirror}
\newcommand{\titleshort}{
Recovering manifold structure in LLM responses through a joint Euclidean mirror}

\usepackage{titlesec}
\titleformat{\section}[block]{\normalfont\Large\bfseries\raggedright}{\thesection}{1em}{}
\titleformat{\subsection}[block]{\normalfont\large\bfseries\raggedright}{\thesubsection}{1em}{}

\providecommand{\keywords}[1]{{\small{\textbf{\textit{Keywords ---}} #1}}}

\usepackage{fontawesome}
\usepackage{setspace}

\usepackage{mathtools,bigints}
\mathtoolsset{showonlyrefs}

\usepackage[ruled,linesnumbered,noend]{algorithm2e}
\SetKwInput{KwInput}{Input}
\let\oldnl\nl
\newcommand{\nonl}{\renewcommand{\nl}{\let\nl\oldnl}}
\makeatother

\usepackage{authblk}
\usepackage{natbib}

\usepackage{graphicx}
\graphicspath{{Fig/}}

\allowdisplaybreaks
\usepackage{url}

\DeclareMathOperator*{\argmin}{arg\,min}

\author[1]{Maximilian~Baum}
\author[2]{Aranyak~Acharyya}
\author[3]{Tianyi~Chen}
\author[3]{Avanti~Athreya}
\author[4]{Youngser~Park}
\author[1]{Francesco~Sanna~Passino}
\author[3]{Carey~E.~Priebe}
\author[5]{Zachary~Lubberts}

\affil[1]{Department of Mathematics, Imperial College London, United Kingdom}
\affil[2]{Mathematical Institute for Data Science (MINDS),
Johns Hopkins University, United States}
\affil[3]{Department of Applied Mathematics \& Statistics, Johns Hopkins University, United States}
\affil[4]{Center for Imaging Science (CIS), Johns Hopkins University, United States}
\affil[5]{Department of Statistics, University of Virginia, United States}
\date{}

\title{\Huge\textbf{\titledoc}}

\allowdisplaybreaks

\usepackage{multibib}
\newcites{SM}{Supplementary references}

\usepackage[most]{tcolorbox}

\newtcolorbox{significancebox}{
  colback=gray!15,
  colframe=gray!60,
  boxrule=0.5pt,
  arc=2pt,
  left=6pt,
  right=6pt,
  top=6pt,
  bottom=6pt,
  width=\columnwidth,
  before skip=1em,
  after skip=1em
}

\begin{document}

\twocolumn[
\maketitle

\begin{center}
  {\small\bfseries Abstract}
\end{center}

\begin{center}
\begin{minipage}{0.85\textwidth} 
\small
Understanding the behavior of black-box large language models and determining effective means of comparing their performance is a key task in modern machine learning. We consider how large language models respond to a specific query by analyzing how the distributions of responses vary over different values of tuning parameters. 
We frame this problem in a general mathematical setting, treating the mapping from model parameters to response distributions as a structured family of probability measures, endowed with a geometry via a dissimilarity measure. We show how dissimilarities between response distributions can be represented in low-dimensional Euclidean space through a {\em joint Euclidean mirror surface} encoding the underlying geometry, which permits both qualitative and quantitative analysis of large language models and provides insight into predicting response distributions for different values of tuning parameters. We propose an estimation procedure for the underlying joint Euclidean mirror based on observed samples from the response distributions, and we prove its asymptotic properties. Additionally, we propose a statistically consistent procedure to infer the value of an unknown model parameter based on samples from the corresponding response distribution and the estimated joint Euclidean mirror. In an experimental setting with large language models, we find that changes in different tuning parameter values correspond to distinct directions in the embedding space, making it possible to estimate the tuning parameters that were used to generate a given response. 

\vspace*{0.5em}
\keywords{embedding methods, Euclidean mirror, large language models, parameter recovery.}
\end{minipage}
\end{center}


\vspace*{2.5em}
]



\section{Introduction}

There is an increasing appetite for personalized AI large language models (LLMs) tailored to particular users and tasks \citep[see, for example,][]{tan2024, wozniak24,zhang2025personalization}. As these models become more specialized through differences in model architecture, training methods and training data, the ability to compare these models in a structured way takes on increasing relevance \citep{Kahng25}. Due to fundamental differences in model architectures and the fact that many model weights are not open source, comparing models based on differences in weights is not possible. It is therefore more tractable and interpretable to compare models based on the responses they generate \citep[see, for example, the \textit{LLM-as-a-judge} framework;][]{Zheng23}. 

In this work, we introduce a method to quantify differences between language models via their responses and demonstrate that it is possible to recover fundamental details about a language model using the representation of such differences.  This methodology also provides a framework to understand the sensitivity of model output to changes in tuning parameters (such as, for example, temperature), architecture, inference time or access to specialized data sources.
We present this approach as a general mathematical framework in which responses from LLMs are considered as realizations from high-dimensional probability distributions, indexed by parameters, and we equip the underlying space of distributions with a dissimilarity measure which introduces a latent geometry. We then introduce the concept of a \textit{joint Euclidean mirror} to represent differences between distributions within a low-dimensional Euclidean space, and we propose a statistically consistent estimation procedure to estimate this object when samples from these probability distributions are available. The proposed procedure is visualized in Figure~\ref{fig:mirror_viz} in an example on responses from LLMs; a more detailed description about each of the steps involved in the algorithm will be given in Section~\ref{Sec:Methodology}. We further show that the proposed joint Euclidean mirror framework could be used to recover latent generative conditions from unlabelled samples using the learned representation of distributions.
To the best of our knowledge, this is the first work to formalize the problem of recovering latent generative parameters from LLM responses within a principled mathematical framework.

\begin{figure*}[t]
\centering
\includegraphics[width=\textwidth]{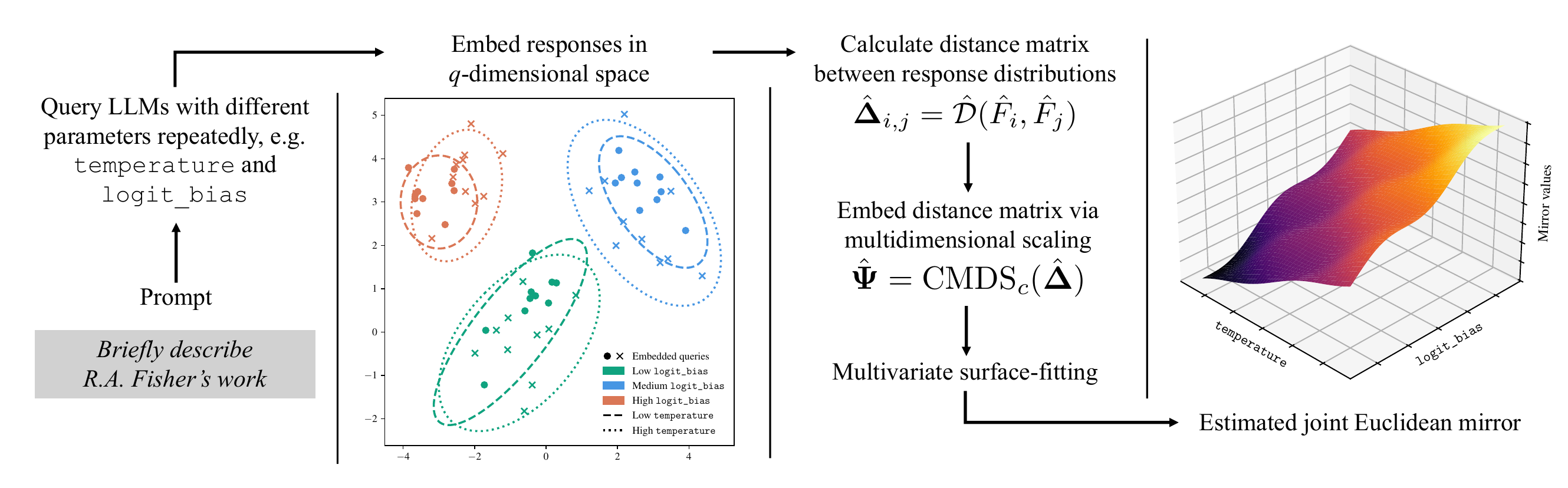}
\caption{Visualization of the joint Euclidean mirror estimation 
for 
from LLMs, for a given prompt, and two-dimensional parameters.}
\label{fig:mirror_viz}
\end{figure*}

The rest of this paper is organized as follows: Section~\ref{sec:mirror-background} discusses the required background and methodological setup for this work, followed by a description of the proposed methods in Section~\ref{Sec:Methodology}. Theoretical results about our method are proved in Section~\ref{Sec:Theory}. Section~\ref{sec:Illustrations} demonstrates the proposed algorithms and their properties on simulated data, followed by an example with responses from LLMs in Section~\ref{sec:LLM-mirrors}.

\section{Background and setup}
\label{sec:mirror-background}
Given the limited understanding of the internal functioning of LLMs, one approach to study these models is to treat them as a black-box in which one inputs a query and is returned a random response. In this way, each response from an LLM provides a single realization of a random process \citep[see, for example,][]{Helm2025}. By treating each query of an LLM as a random sample from an unknown text distribution and utilizing a text embedding which can embed any length text into a fixed $q\in\mathbb{N}$ dimensional vector, the problem of comparing different LLMs can be viewed as analogous to the problem of comparing different probability distributions in $\mathbb{R}^q$. 

The methods that we consider combine key ideas based on the concepts of the \emph{Euclidean mirror} \citep{athreya2025euclidean} as well as the study of large language models in terms of the \emph{Data Kernel Perspective Space (DKPS)} \citep{Helm2025}. The central idea of a Euclidean mirror is to define a function that maps from a non-Euclidean space into Euclidean space in such a way that a specific notion of distance between objects in the original space is preserved. In the case of dynamic graphs presented in \cite{athreya2025euclidean}, this structure is used to encode the distance between time-varying latent position distributions in terms of the maximum variation distance metric, under a random dot product graph modeling framework. In the setting proposed in \cite{athreya2025euclidean}, the evolution of latent positions for nodes in the graph is parameterized with respect to time, and the Euclidean mirror is defined as a function $\psi:[0,T] \to \mathbb{R}^c$ with $c \in \mathbb{N}$ such that for all $t, t^\prime \in [0,T]$, the Euclidean distance between $\psi(t)$ and $\psi(t^\prime)$ provides a representation for the amount of change of the latent position distributions between times $t$ and $t^\prime$. In our setting, we extend the notion of the mirror beyond the maximum variation distance metric to accommodate general notions of distance and to a multivariate setting where the distances in question are parameterized by two or more features. 


    
In this paper, the non-Euclidean space that we study is the space of distributions. Consider a set of distributions $\mathcal{F}$ parameterized by a vector $x \in \mathcal{X} \subset \mathbb{R}^d$ for $d\in\mathbb{N}$. In the case of a large language model, $F_x$ could represent the distribution of responses to a given query, embedded in $\mathbb{R}^q$, when the model is equipped with parameters $x\in\mathcal{X}$. For example, in OpenAI's ChatGPT, \texttt{temperature} and \texttt{logit\_bias} are parameters that could be varied to control the randomness of the output and specific token probabilities in the response. 

If we specify a distance metric $\mathcal{D}$ on the space $\mathcal{F}$, then the idea of a joint Euclidean mirror is to specify a function $f: \mathcal{X} \to \mathbb{R}^c$ such that distance on the space of distributions is reflected by Euclidean distance in $\mathbb{R}^c$. Depending on the set of distributions $\mathcal{F}$, the parameter space $\mathcal{X}$ and the distance metric $\mathcal{D}$ such a function may or may not exist for a given dimension of the mirror $c$. Therefore, to define the conditions for the existence of a mirror, we introduce the following notion of exact Euclidean realizability. 




\begin{definition}[Exact Euclidean realizability]
    \label{def:joint-realizable}
    Let $\mathcal{X} \subset \mathbb{R}^d$ and let $\mathcal{F}$ be a set of distributions on $\mathbb{R}^q$ such that each 
    $F_x\in\mathcal{F}$ is parameterized by some $x \in \mathcal{X}$. Let $\mathcal{D}$ be a distance metric on the space of distributions $\mathcal{F}$. 
We say that the pair $(\mathcal{F}, \mathcal{D})$ is exactly Euclidean $c$-realizable if there exists a continuous function $f:\mathcal{X} \to \mathbb{R}^c$ such that for any $x,x^\prime \in \mathcal{X}$:
    \begin{equation*}
        \norm{f(x)-f(x^\prime)} =
    \mathcal{D}(F_x, F_{x^\prime}).
\end{equation*}
Equivalently, we say that the the metric space $(\mathcal{F}, \mathcal{D})$ is exactly Euclidean $c$-realizable if 
$(\mathcal{F}, \mathcal{D})$ and $(\mathbb{R}^c, \mathcal{E})$ are isometrically equivalent, where $\mathcal{E}$ is the Euclidean distance metric.
\end{definition}

We acknowledge that assuming $(\mathcal{F}, \mathcal{D})$ and $(\mathbb{R}^c, \mathcal{E})$ are isometrically equivalent is a strong assumption, which may seem
unrealistic when $\mathcal{F}$ is the space of text‑embedding distributions with large values for $q$. 
However, 
the application to LLM responses in Section~\ref{sec:LLM-mirrors} provides empirical evidence that the assumption 
is reasonable in practice.
Based on the definition of exact Euclidean realizability, we can then provide a definition for the \textit{joint Euclidean mirror} used in this work. 

\begin{definition}[Joint Euclidean mirror]
\label{def:Mirror}
    Let $f:\mathcal{X} \to \mathbb{R}^c$ be a continuous function. We say that $f$ is a joint Euclidean mirror for the metric space $(\mathcal{F}, \mathcal{D})$ if for all $x, x^\prime \in \mathcal{X} $, \begin{equation*}
\norm{f(x)-f(x^\prime)}=\mathcal{D}(F_x,F_{x^\prime}).
\end{equation*}
\end{definition}

When a mirror exists, this structure allows us to determine which of the attributes encoded in $x$ drive significant changes in model output as measured by the selected distance metric $\mathcal{D}$ and provides us with a method to infer properties of an unobserved distribution $F_{x^{\ast}}$ corresponding to a new parameter vector $x^{\ast}$. In the case of LLMs, the parameter vector $x$ might encode features such as model temperature, access to specialized or sensitive training data or the amount of inference compute used to generate the response. Conversely, if a target distribution $F_{x^\ast}$ is observed, but the associated vector $x^\ast$ is unknown, the mirror could be used to estimate possible values of the parameter; this \emph{parameter recovery} procedure, and the conditions under which it is possible, will be made more precise in Sections~\ref{Sec:Methodology}~and~\ref{Sec:Theory}. 

The setting in \cite{Helm2025} and related literature, such as \cite{acharyya2024consistent, acharyya2025concentration}, is similar to ours: they embed large language models by defining distances between their responses to a fixed set of queries and then applying raw-stress MDS to embed the resulting distance matrix into Euclidean space, with an emphasis on providing sufficient conditions under which the estimated embedding converges to the unknown population embedding.
In contrast, we propose a more general framework that embeds the entire family of distributions, coupling the latent geometry of the space \citep[called DKPS in][]{Helm2025} encoded by the mirror, with an underlying parameter space. In this way, we can frame inference questions for LLMs as parameter inference problems.
Finally, while DKPS is based on the mean-discrepancy dissimilarity between responses, our approach allows for any metric.

\section{Methodology}
\label{Sec:Methodology}

Provided there exists a Euclidean mirror $f$ for the metric space $(\mathcal{F}, \mathcal{D})$, an immediate question of interest is how one can recover this structure given only partial information about the set $\mathcal{F}$. We study the setting where, rather than observing the the full set of distributions $\mathcal{F}$, we observe only $n\in\mathbb{N}$ independent and identically distributed (\textit{iid}) 
samples from a collection of $m\in\mathbb{N}$ distributions belonging to $\mathcal{F}$. Our problem of interest is then the construction of an estimate of the joint Euclidean mirror $f$ given only these samples. 
To do this, we propose to follow a procedure consisting of three steps: 
\begin{enumerate}
    \item Estimation of the distance matrix 
    between distributions
    for the set of parameters 
    for which samples are available;
    \item Estimation of the joint Euclidean mirror values at parameter values for which samples are observed, via classical multidimensional scaling \citep[CMDS;][]{kruskal1964multidimensional}; 
    \item Estimation of the full joint Euclidean mirror function via 
    multivariate surface fitting from the estimated values. 
\end{enumerate}
In this section, we describe each of the above steps in detail. Furthermore, in Section~\ref{Sec:Theory}, we prove theoretical results which provide a strong justification for the proposed procedure.

\paragraph{Estimation of the distance matrix.} 
Let $\x_1, \x_2, \dots, \x_m  \in \mathcal{X} \subset \mathbb{R}^d$ denote the collection of parameter vectors for which we observe samples from the corresponding distributions $F_{x_1}, \dots , F_{x_m}\in\mathcal{F}$, and let $\mathcal{S}_{x_i}$ denote an \textit{iid} sample of size $n$ from the distribution $F_{x_i}$. If we denote the empirical distribution of the sample $\mathcal{S}_{x_i}$ by $\hat F_{x_i}$, then we can calculate an estimate for the dissimilarity $\mathcal{D}(F_{x_i}, F_{x_j})$ via an estimate $\mathcal{\hat D}(\hat F_{x_i}, \hat F_{x_j})$.
Collecting these pairwise estimates yields a matrix
$\hat{\boldsymbol{\Delta}} \in \mathbb{R}^{m \times m}$ with entries
\begin{equation}
(\hat{\boldsymbol\Delta})_{i,j} = \hat{\mathcal{D}}(\hat F_{x_i}, \hat F_{x_j}), \quad i,j \in [m], \label{eq:distances_est}
\end{equation}
which serves as an empirical estimate of the pairwise distance
matrix associated with the distributions
$F_{x_1}, \ldots, F_{x_m}$.

So far, we have not specified any particular form for the dissimilarity $\mathcal{D}$, leaving it open to the requirements of the specific problem.
When the objects of interest are probability distributions, a natural and widely used choice is the Wasserstein $p$-distance, which provides a meaningful notion of dissimilarity between distributions by expressing the minimal cost of transporting probability mass from one distribution to the other. 
For 
$F_x,F_{x^\prime}\in\mathcal{F}$, the Wasserstein $p$-distance for the dissimilarity $\mathcal{D}(F_x, F_{x^\prime})$ takes the following form:  
\begin{equation}
W_p(F_x, F_{x^\prime}) =
\inf_{(Y_x,Y_{x^\prime})\sim\Gamma(F_x, F_{x^\prime})} \left(\mathbb{E}\left[ \| Y_x - Y_{x^\prime} \|^p \right] \right)^{1/p},
\end{equation}
where $p\in[1,\infty)$, and $\Gamma(F_x, F_{x^\prime})$ is the set of couplings of $F_x$ and $F_{x^\prime}$, corresponding to the set of probability measures on $\mathbb{R}^q \times \mathbb{R}^q$ whose marginals are $F_x$ and $F_{x^\prime}$.
In the discrete case, given $n$ \emph{iid} samples $y_{x,1},\dots,y_{x,n}\sim F_x$ and $y_{x^\prime,1},\dots,y_{x^\prime,n}\sim F_{x^\prime}$, the distance $\hat{\mathcal{D}}(\hat F_x, \hat F_{x^\prime})$ between the empirical measures $\hat F_x(y) = n^{-1}\sum_{i=1}^n\delta_{y_{x,i}}(y)$ and $\hat F_{x^\prime}(y) = n^{-1}\sum_{i=1}^n\delta_{y_{x^\prime,i}}(y)$ takes the form
\begin{equation}
W_p(\hat F_x, \hat F_{x^\prime}) =
\min_{\pi\in\Pi_n}\left(\frac{1}{n}\sum_{i=1}^n \|y_{x,i} - y_{x^\prime,\pi(i)}\|^p\right)^{1/p},
\end{equation}
where $\Pi_n$ is the set of all permutations of $\{1,\dots,n\}$. 

We emphasize that the methodology proposed in this work is not exclusive to this choice of distance metric or parameter space, but rather, is generally applicable to any setting where where the pair $(\mathcal{F}, \mathcal{D})$ is Euclidean $c$-realizable for some integer $c \in \mathbb{N}$. 
Alternative choices for the dissimilarity $\mathcal{D}$ between distributions include metrics such as the total variation distance, Kullback–Leibler or Jensen–Shannon divergences, and kernel-based measures like the maximum mean discrepancy (MMD). 
Earlier approaches in the literature have used simpler summaries of the difference between distributions to characterize their distance within the context of DKPS. For example, \citet{Helm2025, acharyya2024consistent, acharyya2025concentration} use $\mathcal{D}(F_x,F_{x^\prime}) = \| \mathbb{E}(Y_x) - \mathbb{E}(Y_{x^\prime})\|$ for $Y_x\sim F_x$ and $Y_{x^\prime}\sim F_{x^\prime}$. 
In contrast, the Wasserstein distance used in our work captures not only differences in location, but also differences in the spread and shape of the distributions, providing a more complete characterization of their variability. This can be particularly important for distributions expressing responses of large language models, where parameters like temperature encode randomness in the responses.

\paragraph{Mirror estimation at the observed parameters.} 
For estimation of the joint Euclidean mirror values $f(x_1),\dots,f(x_m)$ at the observed parameter values $x_1,\dots,x_m\in\mathcal{X}$, we use estimates based on the classical multidimensional scaling (CMDS) embedding technique of \citet{kruskal1964multidimensional}, a dimensionality reduction algorithm designed to embed points in a low-dimensional space in such a way that the distances between points are preserved. Let 
\begin{equation}
\hat{\boldsymbol{\Psi}} = \mathrm{CMDS}_c(\hat{\boldsymbol{\Delta}}),
\end{equation}
where $\mathrm{CMDS}_c(\cdot)$ denotes classical multidimensional scaling into $\mathbb{R}^c$, where $c$ is the joint Euclidean mirror dimension. We use each row of $\hat{\boldsymbol\Psi}$ as a mirror estimate at the observed parameter values, such that:
\begin{equation}
\tilde f(x_i) := (\hat{\boldsymbol{\Psi}})_i,\quad i\in[m].
\label{eq:mirror_estimate_parameters}
\end{equation}
At this point, we emphasize that the mirror $f$ is not unique. In particular, it is clear from Definition~\ref{def:Mirror} that applying any distance-preserving Euclidean isometry to a mirror will produce a new object that continues to satisfy the definition. Consequently, the estimates $(\hat{\boldsymbol{\Psi}})_1,\ldots,(\hat{\boldsymbol{\Psi}})_m$ produced by CMDS recover the mirror values $f(x_1),\ldots,f(x_m)$ only up to a Euclidean isometry of $\mathbb{R}^c$. In particular, we can say that $(\hat{\boldsymbol{\Psi}})_i\approx \boldsymbol{Q}f(x_i) + \boldsymbol{a}$ for $\boldsymbol{Q}\in\mathbb{O}(c)$ and $\boldsymbol{a}\in\mathbb{R}^c$. This source of non-identifiability is inconsequential in practice, because the quantities of interest depend only on pairwise distances, which are invariant under Euclidean isometries. 



\paragraph{Estimation of the joint Euclidean mirror.}
After estimating the joint Euclidean mirror values at the points corresponding to the observed parameters $x_1,\dots,x_m\in\mathcal{X}$, we now proceed to construct an estimate of the joint Euclidean mirror function for all possible values of $x\in\mathcal{X}$. This can be viewed as a multivariate surface-fitting problem: given function approximations at a finite set of points, we seek to construct a function $\hat f:\mathcal{X}\rightarrow\mathbb{R}^c$ that approximates the underlying joint Euclidean mirror function $f$ over the entire parameter space. Several classes of methods could be used for this purpose. For example, one could employ multivariate polynomial regression or spline-based surface fitting, such as tensor-product splines, which construct smooth approximations via piecewise polynomial bases \citep[see, for example,][]{deBoor62, Stone94, eilers1996flexible, Schumaker_2007}. 

Here, we propose to construct the estimated joint Euclidean mirror $\hat f$ as a piecewise linear function based on a Delaunay triangulation of the parameter values, written $\mathsf{DT}(x_1,\dots,x_m)$, which provides an optimal approach for multivariate function interpolation \citep[see, for example,][]{Chen04}. A Delaunay triangulation partitions the convex hull of the points $x_1,\dots,x_m\in\mathcal X$, written $\mathcal{X}_m = \mathsf{CH}(\{x_1,\dots,x_m\})$, into a collection of $K$ simplices $\{\mathcal{P}_1,\dots,\mathcal{P}_K\}$ with vertices taken from the observed points, 
such that $\mathcal{P}_j = \mathsf{CH}(\mathcal{V}_j)$, where $\mathcal{V}_j\subset\{x_1,\dots,x_m\}$ is a set of $d+1$ vertices, and $\mathsf{CH}$ denotes their convex hull. 
The number of simplices $K$ depends on the geometry of the points $x_1,\dots,x_m$, with $K = O(m^{\lceil d/2\rceil})$. 
For any $x\in \mathcal{X}_m$,
let $\mathcal{P}_x$ be the simplex containing $x$ from the Delaunay triangulation $\mathsf{DT}(x_1,\dots,x_m)$.
Because $\mathcal{P}_x$ is a convex set defined by 
$d+1$ vertices $[v_1(\mathcal{P}_x), \dots, v_{d+1}(\mathcal{P}_x)]\in\{x_1,\dots,x_m\}^{d+1}$, we can express $x$ using unique barycentric coordinates $\lambda_1(x), \dots, \lambda_{d+1}(x)$ satisfying 
\begin{equation}
x = \sum_{j=1}^{d+1} \lambda_j(x)\,v_j(\mathcal{P}_x), 
\end{equation}
where $\sum_{j=1}^{d+1} \lambda_j(x) =1,\ \lambda_j(x) \geq 0,\ j=1,\dots,d+1$.
Using these coordinates we build the estimate $\hat f(x)$ as the Delaunay interpolant \citep[see, for example,][]{Gillette24}:
\begin{equation}
    \hat f(x) = \sum_{j=1}^{d+1} \lambda_j(x) \tilde f\{v_j(\mathcal{P}_x)\}, \label{eq:delauney_interpolant}
\end{equation}
where $\tilde f$ is the estimated value of the mirror at the observed parameter values which we obtain from the rows of $\hat{\boldsymbol\Psi}$, \emph{cf.}~\eqref{eq:mirror_estimate_parameters}. 

The entire procedure is summarized in Algorithm~\ref{Algo:JointMirror}. 

\begin{algorithm}[t]
\caption{Joint Euclidean mirror estimation}
\label{Algo:JointMirror}

\textbf{Input} Samples $\mathcal{S}_{x_i}$ of size $n$ from 
$F_{x_i}\in\mathcal{F}$, $i \in [m]$.\\
\textbf{Output} Estimate $\hat f$ of the joint Euclidean mirror $f$. \\
\For{$r \in [m]$}{
\For{$s \in [m]$}{
\textbf{Calculate} $(\hat{\boldsymbol{\Delta}})_{r,s}=
\hat{\mathcal{D}}(\hat F_{\x_r}, \hat F_{\x_s})$, where $\hat F_{x_i}$ is the empirical distribution of $\mathcal{S}_{x_i}$.
} }

\textbf{Calculate} $\hat{\mathbf{\Psi}} = \mathrm{CMDS}_c(\hat{\boldsymbol{\Delta}}) \in \mathbb{R}^{m \times c}$, where $\mathrm{CMDS}_c(\cdot)$ corresponds to CMDS into $\mathbb{R}^c$. 

\textbf{Construct} an estimated joint Euclidean mirror $\hat{f}$ via a surface-fitting method, such as interpolating the points $\{[x_i,(\hat{\mathbf{\Psi}})_i]\; i \in[m]\}$ via the Delaunay triangulation interpolant, \emph{cf.}~Equation~\eqref{eq:delauney_interpolant}.\\
\end{algorithm}

\subsection{Parameter recovery}

We now consider the case in which, in addition to \emph{iid} samples from $F_{x_1},\dots,F_{x_m}\in\mathcal{F}$ for known parameters $\x_1, \x_2, \dots, \x_m  \in \mathcal{X}$, additional \emph{iid} samples $\mathcal{S}_{x^\ast}$ from a distribution $F_{x^\ast}\in\mathcal{F}$ are available, but the parameter $x^\ast\in\mathcal{X}$ is unknown. We denote $\mathcal{S}_{x^\ast}$ as the \emph{unlabeled} samples. Under this framework, an immediate question of interest is whether the unknown parameter $x^\ast$ can be consistently estimated from the unlabeled samples $\mathcal{S}_{x^\ast}$, leveraging the information contained in the labeled samples $\{\mathcal{S}_{x_1}, \dots, \mathcal{S}_{x_m}\}$ and the structure 
encoded by the underlying joint Euclidean mirror $f$. We call this task \emph{parameter recovery}. 

To estimate the value of the underlying parameter for the unlabeled samples, we propose an adjustment to Algorithm~\ref{Algo:JointMirror}. In particular, we propose to calculate the $(m+1)\times(m+1)$ distance matrix from the samples $\{\mathcal{S}_{x_1},\dots,\mathcal{S}_{x_m},\mathcal{S}_{x^\ast}\}$, with entries consisting of the pairwise estimated dissimilarities, as in \eqref{eq:distances_est}. Next, classical multidimensional scaling is applied to the estimated distance matrix to obtain a matrix $\hat{\boldsymbol{\Psi}}\in\mathbb{R}^{(m+1)\times c}$, and the estimate of the mirror $\hat f$ is constructed by multivariate surface-fitting based only on the first $m$ rows of the matrix, for which corresponding observed parameter values $x_1,\dots,x_m$ are available. The estimate $\hat x$ of the unknown parameter $x^\ast$ of the distribution corresponding to unlabeled samples is then obtained by minimizing the difference between the mirror value for the unknown parameter, corresponding to $(\hat{\boldsymbol\Psi})_{m+1}$, and the estimated joint Euclidean mirror $\hat f$, as follows: 
\begin{equation}
    \hat x = \argmin_{s\in\mathcal{X}_m}\norm{\hat f(s)- (\hat{\boldsymbol{\Psi}})_{m+1}}.
\end{equation}
In Section~\ref{Sec:Theory}, we will prove that $\hat x$ provides a consistent estimate for $x^\ast$. The procedure is summarized in Algorithm~\ref{alg:Recovery}.

\begin{algorithm}[t]
    \caption{Parameter recovery via the joint mirror}
    \label{alg:Recovery}
    \textbf{Input} Samples $\mathcal{S}_{x_i}$ for known $x_1,\dots,x_m \in \mathcal{X}$; unlabeled sample $\mathcal{S}_{x_{m+1}}$ for $x_{m+1}\in\mathcal{X}$ unknown. \\ 
    \textbf{Output} Estimate $\hat{x}$ for the unknown parameters $x_{m+1}$. \\ 
    \For{$r \in [m+1]$}{
    \For{$s \in [m+1]$}{
    \textbf{Calculate} $(\hat{\boldsymbol{\Delta}})_{r,s}=
    \hat{\mathcal{D}}(\hat F_{\x_r}, \hat F_{\x_s})$, where $\hat F_{x_i}$ is the empirical distribution of $\mathcal{S}_{x_i}$.
    } }
    \textbf{Calculate} $\hat{\mathbf{\Psi}} = \mathrm{CMDS}_c(\hat{\boldsymbol{\Delta}}) \in \mathbb{R}^{(m+1) \times c}$, where $\mathrm{CMDS}_c(\cdot)$ corresponds to CMDS into $\mathbb{R}^c$. \\
    \textbf{Construct} an estimated mirror $\hat f(x)$ by interpolating the first $m$ rows of $\hat{\mathbf{\Psi}}$ corresponding to the labeled samples (using, for example, Delaunay triangulation). \\
    \textbf{Define} the mirror value $\hat{f}^\ast$ as the $(m+1)$-th row of $\hat{\mathbf{\Psi}}$, corresponding to the unlabeled sample. \\
    \textbf{Return} the estimated parameter estimate 
    \begin{equation}
    \hat x = \argmin_{s\in\mathcal{X}_m}\norm{\hat f(s)- \hat f^\ast}.
    \end{equation}
\end{algorithm}

This approach to parameter recovery is particularly appealing for practical applications, as it provides a statistically principled data-driven approach to recovering unknown latent parameters from unlabeled samples. In the context of large language models, depending on the choice of the parameter space $\mathcal{X}$, this framework could, for example, enable the identification of implicit training or tuning characteristics, or potentially reveal access to sensitive or proprietary information encoded in the generative process. Such a perspective appears largely unexplored in the literature, and could potentially open new directions for inference in complex generative models.

\section{Theoretical results}
\label{Sec:Theory}

In this section, we prove theoretical results related to the steps detailed in Algorithms~\ref{Algo:JointMirror} and~\ref{alg:Recovery}, demonstrating that the proposed procedure is mathematically principled and 
consistent. 

Consider the framework detailed in Section~\ref{Sec:Methodology}, in which \emph{iid} samples of size $n$ are observed from distributions $F_{x_1}, \dots , F_{x_m}\in\mathcal{F}$ for $\x_1, \x_2, \dots, \x_m  \in \mathcal{X} \subset \mathbb{R}^d$, and Algorithm~\ref{Algo:JointMirror} is used to obtain an estimate $\hat f$ of the joint Euclidean mirror. In this section, we derive probabilistic bounds showing that, if the selected estimator $\mathcal{\hat D}(\hat F_{x_i}, \hat F_{x_j})$ of the dissimilarity measure provides a good approximation of the distance metric $\mathcal{D}$, then it is possible to show that the estimated mirror $\hat f$ converges to a valid mirror satisfying Definition~\ref{def:Mirror} in the asymptotic regime where both the number of observed parameter values $m$ and number of samples $n$ increase. 
For our probabilistic bounds, we use the concept of \emph{overwhelming probability} \citep[see, for example,][]{tao2010random}. An event $E_n$ depending on a parameter $n$ holds with overwhelming probability if, for every fixed $\alpha>0$, there exists a constant $C_\alpha>0$ independent of $n$ such that $\mathbb{P}(E) \geq 1 - C_\alpha n^{-\alpha}$ holds.

To make the notion of a good estimator for the distances more precise, we define the theoretical matrix of pairwise distances $\bDelta \in \mathbb{R}^{m \times m}$ and the empirical matrix $\bDeltahat \in \mathbb{R}^{m \times m}$ of pairwise distances for the 
$m$ observed parameters as:
\begin{align}
(\bDelta)_{i,j}= \mathcal{D}(F_{\x_i}, F_{\x_j}), & & (\bDeltahat)_{i,j}= \mathcal{\hat D}(\hat F_{\x_i}^n, \hat F_{\x_j}^n),
\end{align}
where the 
distributions $\hat F_{x}^n$ are 
estimated from $n$ \emph{iid} samples from the corresponding distribution $F_{x}.   $
To ensure that the estimated mirror $\hat f$ converges to a valid joint Euclidean mirror, it is required that the true distances $\mathcal{D}(F_{x}, F_{x^\prime})$ are well-represented by the finite-sample counterpart $\mathcal{\hat D}(\hat F_{x}, \hat F_{x^\prime})$ such that $\norm{\bDelta  - \bDeltahat}_F \rightarrow 0$ for $m,n\to\infty$ with overwhelming probability. Because the dimensionality of $\bDelta$ grows with $m$, 
it is necessary to ensure that the sample size $n$ grows sufficiently quickly relative to the number of parameters $m$. The necessary growth rate of $n$ relative to $m$ will depend on the specific choice of $\mathcal{D}$ and estimator $\hat{\mathcal{D}}$. Under Assumption~\ref{cond:m-n-growth} and a finite moment assumption on the distributions, Proposition~\ref{prop:consistent-delta-estimation} demonstrates that this condition can be satisfied for the Wasserstein 1-distance proposed in Section~\ref{Sec:Methodology} and used in the examples in Sections \ref{sec:normal-mirrors} and \ref{sec:LLM-mirrors}. 

\begin{assumption} \label{cond:m-n-growth}
    Suppose the number of distributions $m$ depends on the sample size $n$, and denote this by $m(n)$. Let $q > 2$ be a fixed integer denoting the dimensionality of each distribution. We assume
    \begin{equation*}
    \begin{aligned}
    \lim_{n \to \infty}
        m(n)
        \left(
\frac{\mathrm{log}^2(n)}{n}
        \right)^{1/q} 
        = 0.
    \end{aligned}
    \end{equation*}
    Note that this is satisfied when $n = \Omega(m^{^{q+1}})$. 
\end{assumption}

Assumption~\ref{cond:m-n-growth}
ensures that the number of samples from every response distribution grows sufficiently quickly relative to the number of distributions under consideration. 
Under this regime, Proposition~\ref{prop:consistent-delta-estimation} establishes an upper bound on the rate at which the estimated distances between the  response distributions approach their population counterparts.

\begin{proposition}
    \label{prop:consistent-delta-estimation}
    Let $\mathcal{F}$ be a set of distributions on $\mathbb{R}^q$. Suppose there exists a $\gamma>0$ and a constant $C_{\mathcal{F}}$ such that for each distribution $F \in \mathcal{F}$, the moment condition $\int_{\mathbb{R}^q} e^{\gamma \vert s \vert^2} dF(s) \leq C_{\mathcal{F}}$ holds. Let $\mathcal{D}$ and $\hat{\mathcal{D}}$ be the Wasserstein 1-distance.
Suppose $m(n)$ satisfies Assumption \ref{cond:m-n-growth}. 
Then, there exists $n^\ast\in\mathbb{N}$ such that, if $n>n^\ast$:
\begin{equation}
\| \bDeltahat - \bDelta \|_F < m\left(\frac{\log^2(n)}{n}\right)^{1/q}
\end{equation}
holds with overwhelming probability. 
\end{proposition}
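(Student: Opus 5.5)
The plan is to reduce the Frobenius bound to a uniform entrywise bound, and then control each entry with a concentration inequality for the empirical Wasserstein distance combined with a union bound over the $m^2$ pairs.

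First, by the triangle inequality for $W_1$,
\begin{equation}
\bigl|W_1(\hat F_{x_i}^n,\hat F_{x_j}^n)-W_1(F_{x_i},F_{x_j})\bigr|\le W_1(\hat F_{x_i}^n,F_{x_i})+W_1(\hat F_{x_j}^n,F_{x_j}).
\end{equation}
It therefore suffices to bound the one-sample errors $\varepsilon_i:=W_1(\hat F_{x_i}^n,F_{x_i})$. Note also that
\begin{equation}
\|\bDeltahat-\bDelta\|_F^2\le\sum_{i,j}(\varepsilon_i+\varepsilon_j)^2\le 4m^2\max_i\varepsilon_i^2,
\end{equation}
so it is enough to show that $\max_i\varepsilon_i<\tfrac12(\log^2 n/n)^{1/q}$ with overwhelming probability. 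Since $\max$ involves only $m$ terms, a single union bound over $i\in[m]$ is all that is needed.

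Second, I would invoke the concentration result of Fournier and Guillin (2015, Theorem 2) for empirical measures under the exponential moment condition $\int e^{\gamma|s|^2}dF\le C_{\mathcal F}$, which is uniform over $\mathcal F$. With $p=1$ and $q>2$, this gives constants $c,C$ depending only on $q,\gamma,C_{\mathcal F}$ such that, for $t\in(0,1]$,
\begin{equation}
\mathbb P\bigl(W_1(\hat F^n,F)\ge t\bigr)\le C\exp(-cnt^{q})+C\exp(-cnt^{2}).
\end{equation}
Here the moment constant enters uniformly, so the same $c$ and $C$ work for every $F\in\mathcal F$. Choosing $t=\tfrac12(\log^2 n/n)^{1/q}$ makes $nt^q=2^{-q}\log^2 n$. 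Since $t\to0$ and $q>2$, we also have $nt^2\ge nt^q$ for large $n$. Each tail is then at most $2C\exp(-c'\log^2 n)$.

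Third, the union bound gives a failure probability of at most $2Cm\exp(-c'\log^2 n)$. Assumption \ref{cond:m-n-growth} implies $m(n)\le (n/\log^2 n)^{1/q}\le n$ for large $n$, so the failure probability is at most $2Cn\,n^{-c'\log n}$. This is smaller than $C_\alpha n^{-\alpha}$ for every fixed $\alpha$, which is exactly the overwhelming-probability requirement. The threshold $n^\ast$ is taken large enough that $t\le1$ and that the inequality $nt^2\ge nt^q$ holds. On this event $\|\bDeltahat-\bDelta\|_F\le 2m\max_i\varepsilon_i<m(\log^2n/n)^{1/q}$. The role of the assumption is then only to make this bound vanish, which is what is needed later.

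The main obstacle is step two: applying the Fournier–Guillin bound correctly. Specifically, I need to check that its constants depend only on the moment bound, so that they are uniform over $\mathcal F$, and that the $q>2$ regime yields a tail rate of $n t^q$ rather than a rate with extra logarithmic factors. If the exact form differs, I would first bound $\mathbb E\,W_1(\hat F^n,F)\lesssim n^{-1/q}$. I would then add a McDiarmid/transport-entropy concentration step for $W_1$ around its mean, using truncation on the event $\max|y|\lesssim\sqrt{\log n}$, which holds with overwhelming probability by the Gaussian-type moment. The $\log^2 n$ slack in the rate is there precisely to absorb these logarithmic factors.
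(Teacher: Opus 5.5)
Your proposal is correct and follows essentially the same route as the paper: Fournier--Guillin Theorem 2 under the uniform Gaussian-type moment gives the $\exp(-cnt^{q})$ tail for $q>2$, the triangle inequality reduces each entry to two one-sample errors, and a union bound converts the uniform entrywise bound into the Frobenius bound. Your bookkeeping is a bit tighter than the paper's. Choosing $t=\tfrac12(\log^2 n/n)^{1/q}$, and making the union bound explicit through $m\le n$ from Assumption~\ref{cond:m-n-growth}, yields exactly the stated constant and the strict inequality, whereas the paper's proof drops a factor of 2 between its $2t$ tail bound and its entrywise claim.
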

\begin{proof}
The result is proved in Section~\ref{proof:consistent-delta-estimation}.
\end{proof}

This proposition derives a bound for the convergence of the estimated Wasserstein distances and their theoretical counterparts. 
It must be remarked that the bound heavily penalizes the dimensionality of the distribution. However, this appears to be close to the best possible result when using the Wasserstein 1-distance for continuous distributions on $\mathbb{R}^q$ as the expected value of the error scales as $n^{-1/q}$. For details see the note on the curse of dimensionality in \cite{panaretos2019statistical}, Section 3.3. This convergence rate can be improved under additional assumptions on the class of distributions under consideration \citep[for example, smoothness of the density; see][Theorem 2.18]{chewi2025statistical}. Empirically, we often observe relatively small errors without requiring extremely large sample sizes, which suggests that such additional structure often holds in practice (\emph{cf.} Section~\ref{sec:LLM-mirrors}). Furthermore, if Wasserstein-based rates are prohibitively slow, one can instead use alternative metrics on distribution spaces (for example, energy-based distances) as discussed in Section~\ref{Sec:Methodology}.

Using the convergence guaranteed by Proposition~\ref{prop:consistent-delta-estimation}, we can further show that, for $m$ and $n$ tending to infinity, the estimated mirror converges to a true mirror for all \textit{observed} values of $x_i$ for $i\in[m]$. To formalize this concept, we introduce the matrix $\Psimat \in \mathbb{R}^{m \times c}$ which represents a discrete analogue of the mirror $f$ and encodes the distances between the each of the observed distributions $F_{x_i}$ for $i \in [m]$. Concretely, if we use $(\Psimat)_i$ to denote the $i$-th row of $\Psimat$, then the matrix $\Psimat$ satisfies $\norm{(\Psimat)_i - (\Psimat)_j} = \mathcal{D}(F_{x_i}, F_{x_j})$ for all $i,j \in [m]$. The existence of such a matrix is guaranteed by the the realizability of the set $(\mathcal{F}, \mathcal{D})$. The formal statement of this result is expressed in Theorem~\ref{conj:psi_convergence_growing_m_and_n}, under the technical assumptions in Assumptions~\ref{assum:boundedset} and \ref{assum:smallest-eigenvalue}.

\begin{assumption} \label{assum:boundedset}
Suppose that $\mathcal{F}$ is bounded with respect to the distance $\mathcal{D}$, so that for any $f_1,f_2\in\mathcal{F}$, there exists a finite constant $D_{\mathrm{max}}$ such that:
\begin{equation}
\mathcal{D}(f_1,f_2)\leq D_{\mathrm{max}}.
\end{equation}
\end{assumption}

\begin{assumption}\label{assum:smallest-eigenvalue}
    Let $\Bmat = -\frac{1}{2} \mathbf{H}_{m} \bDelta^{\odot 2} \mathbf{H}_m$ where $\bDelta^{\odot 2}$ denotes the element-wise square of $\bDelta$ and $\mathbf{H}_m$ is the centering matrix of size $m$. We assume that there exists constants $C >0$ and $m^\ast \in \mathbb{N}$ such that $\lambda_c(\Bmat) \geq C m$ holds for all $m > m^\ast$.
\end{assumption}


\begin{theorem}
\label{conj:psi_convergence_growing_m_and_n}
Consider $\Psimat = \mathrm{CMDS}_c(\bDelta)$ and $\Psimathat = \mathrm{CMDS}_c(\bDeltahat)$, and let $\mathcal{D}$ and $\hat{\mathcal{D}}$ be the Wasserstein 1-distance. 
Under 
Assumptions~\ref{cond:m-n-growth}--\ref{assum:smallest-eigenvalue}, 
there exist a constant $K>0$ and 
a sequence of orthogonal matrices $\Wmat \in \mathbb{O}(c)$ such that for any $n$ sufficiently large,
\begin{equation*}
\norm{\Psimat - \Psimathat \Wmat}_{\tti} \leq K 
m^{1/2} \left(\frac{\log^2(n)}{n}\right)^{1/q}
\end{equation*}
holds with overwhelming probability.
\end{theorem}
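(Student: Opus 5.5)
The plan is to reduce the theorem to a $2\to\infty$ perturbation bound for the top-$c$ eigenvectors of the doubly centered matrices. Write $\Bmat = -\tfrac12 \mathbf{H}_m \bDelta^{\odot 2}\mathbf{H}_m$ and $\Bmathat = -\tfrac12 \mathbf{H}_m \bDeltahat^{\odot 2}\mathbf{H}_m$. By exact Euclidean $c$-realizability, $\Bmat$ is positive semidefinite of rank at most $c$. It therefore equals $\mathbf{U}\boldsymbol{\Lambda}\mathbf{U}^\top$ with $\Psimat=\mathbf{U}\boldsymbol{\Lambda}^{1/2}$, and likewise $\Psimathat=\hat{\mathbf{U}}\hat{\boldsymbol{\Lambda}}^{1/2}$ for the top-$c$ eigenpairs of $\Bmathat$.

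\textbf{Step 1 (perturbation of $\Bmat$).} Write $\bDeltahat^{\odot 2}-\bDelta^{\odot 2} = (\bDeltahat-\bDelta)\odot(\bDeltahat+\bDelta)$. Under Assumption~\ref{assum:boundedset}, and since Proposition~\ref{prop:consistent-delta-estimation} makes the entries of $\bDeltahat$ uniformly close to those of $\bDelta$ for large $n$, the entries of $\bDeltahat+\bDelta$ are bounded by $3D_{\max}$. Using $\norm{\mathbf{H}_m}_2=1$, this gives
\[
\norm{\mathbf{E}}_2 \le \norm{\mathbf{E}}_F \le 3D_{\max}\norm{\bDeltahat-\bDelta}_F \le 3D_{\max}\, m\,\epsilon_n, \qquad \mathbf{E}:=\Bmathat-\Bmat,\quad \epsilon_n:=\left(\tfrac{\log^2 n}{n}\right)^{1/q},
\]
with overwhelming probability. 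For the row-wise analysis I also need $\norm{\mathbf{E}}_{\tti}$. Bounding each row of $\bDeltahat-\bDelta$ by the Frobenius norm directly loses too much, so I would instead rerun the argument behind Proposition~\ref{prop:consistent-delta-estimation} entrywise: every entry satisfies $|\hat{\mathcal D}-\mathcal D|\le W_1(\hat F_{x_i},F_{x_i})+W_1(\hat F_{x_j},F_{x_j})\lesssim \epsilon_n$ uniformly, by a union bound over $m$ samples. The union bound costs nothing under overwhelming probability, since $m$ grows polynomially in $n$. This entrywise bound gives $\norm{\mathbf{E}}_{\max}\lesssim\epsilon_n$, hence $\norm{\mathbf{E}}_{\tti}\lesssim m^{1/2}\epsilon_n$ and $\norm{\mathbf{E}}_2\lesssim m\epsilon_n$. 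Centering by $\mathbf{H}_m$ at most quadruples the max norm.

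\textbf{Step 2 (eigenspace and eigenvalue control).} Assumption~\ref{assum:smallest-eigenvalue} gives $\lambda_c(\Bmat)\ge Cm$, and $\lambda_{c+1}(\Bmat)=0$. Since $\norm{\mathbf{E}}_2=o(m)$ by Assumption~\ref{cond:m-n-growth}, Weyl's inequality gives eigengap $\gtrsim m$ and $\hat\lambda_i\asymp m$ for $i\le c$, while boundedness gives $\lambda_1(\Bmat)\le mD_{\max}^2$. Davis--Kahan then yields $\norm{\sin\Theta(\hat{\mathbf U},\mathbf U)}_2\lesssim \epsilon_n$. Let $\Wmat$ be the orthogonal Procrustes alignment of $\mathbf{U}^\top\hat{\mathbf U}$, and use the standard facts $\norm{\mathbf U^\top\hat{\mathbf U}-\Wmat}_2\lesssim\epsilon_n^2$ and $\norm{\Wmat\hat{\boldsymbol\Lambda}^{1/2}-\boldsymbol\Lambda^{1/2}\Wmat}_2\lesssim \norm{\mathbf E}_2/m^{1/2}\lesssim m^{1/2}\epsilon_n$.

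\textbf{Step 3 (two-to-infinity expansion; the main obstacle).} Use $\hat{\mathbf U}\hat{\boldsymbol\Lambda}^{1/2}=\Bmathat\hat{\mathbf U}\hat{\boldsymbol\Lambda}^{-1/2}$ and decompose $\Psimathat\Wmat^\top-\Psimat$ in the usual way (as in Cape--Tang--Priebe and Lyzinski et al.) into three terms:
\begin{itemize}
\item[(i)] $\mathbf{E}\hat{\mathbf U}\hat{\boldsymbol\Lambda}^{-1/2}$, whose $\tti$ norm is at most $\norm{\mathbf E}_{\tti}\, m^{-1/2}\lesssim \epsilon_n$, which is even better than needed;
\item[(ii)] $\mathbf U\mathbf U^\top\Bmat\hat{\mathbf U}\hat{\boldsymbol\Lambda}^{-1/2}-\mathbf U\boldsymbol\Lambda^{1/2}\Wmat$, bounded using $\norm{\mathbf U}_{\tti}\le \norm{\Psimat}_{\tti}/\sqrt{\lambda_c}\lesssim D_{\max}m^{-1/2}$ together with the Step 2 commutator bounds, giving $\lesssim m^{-1/2}\cdot m^{1/2}\epsilon_n\cdot O(1)$;
\item[(iii)] remainder terms involving $(\mathbf I-\mathbf U\mathbf U^\top)\Bmat=0$, which vanish because $\Bmat$ has rank $c$.
\end{itemize}
The delicate part is keeping every term at most $m^{1/2}\epsilon_n$ without needing entrywise independence of $\mathbf E$; the entries are dependent through shared samples, so matrix-Bernstein row bounds are unavailable. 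If the entrywise route of Step 1 fails, the fallback is the crude chain $\norm{\cdot}_{\tti}\le\norm{\cdot}_2$ together with $\norm{\Psimathat\Wmat^\top-\Psimat}_2\lesssim \norm{\mathbf E}_2/\sqrt{\lambda_c}\lesssim m^{1/2}\epsilon_n$. This gives exactly the stated rate, using only Proposition~\ref{prop:consistent-delta-estimation}, Weyl, Davis--Kahan and the standard bound on the difference of square roots of the eigenvalue matrices. I would present this spectral-norm route as the main proof, since it suffices for the claimed $K m^{1/2}\epsilon_n$. Finally, I would relabel $\Wmat^\top$ as $\Wmat$ and collect the constants into $K$, which depends on $C$, $D_{\max}$ and $c$.
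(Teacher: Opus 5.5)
Your argument is correct, and its skeleton is the same as the paper's. Set $\epsilon_n=\{\log^2(n)/n\}^{1/q}$. Both proofs bound the whole CMDS perturbation globally at order $m^{1/2}\epsilon_n$ and then dominate the $\tti$ norm by that global bound.

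The difference is the key lemma. The paper derives no spectral bound itself. It imports from the proof of Theorem~7 in \citet{athreya2025euclidean} a Frobenius bound of the form $\beta\lambda_1^{1/2}(\Bmat)[2+4\beta\kappa^{1/2}+(1+2\beta)\beta/2^{3/2}]$, where $\beta=2^{3/2}\nu/\lambda_c(\Bmat)$ and $\nu=m\epsilon_n$. It then checks that $\lambda_1(\Bmat)=O(m)$, $\kappa=O(1)$ and $\beta=O(\epsilon_n)$, and uses $\norm{\cdot}_{\tti}\le\norm{\cdot}_F$ implicitly. You instead build a spectral-norm bound from Weyl, Davis--Kahan and the square-root commutator estimate. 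This is more self-contained, and it makes explicit two steps the citation leaves hidden:
\begin{itemize}
\item the passage from $\norm{\bDeltahat-\bDelta}_F$ to $\norm{\Bmathat-\Bmat}$ through the bounded entries of $\bDeltahat+\bDelta$;
\item the role of realizability: $\Bmat$ has rank $c$, so $(\mathbf I-\mathbf U\mathbf U^\top)\Bmat=\mathbf 0$.
\end{itemize}

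Your Step~3 deserves more credit than you give it. It needs no independence and no matrix-Bernstein row bounds, only the uniform entrywise bound $\abs{(\bDeltahat-\bDelta)_{ij}}\lesssim\epsilon_n$, which the paper's proof of Proposition~\ref{prop:consistent-delta-estimation} already establishes. Combine it with $\norm{\mathbf E}_{\tti}\le m^{1/2}\norm{\mathbf E}_{\max}$ and $\norm{\mathbf U}_{\tti}\lesssim m^{-1/2}$. Then your terms (i) and (ii) are both $O(\epsilon_n)$, deterministically on that event. So this route gives the row-wise rate $\epsilon_n$, a factor $m^{1/2}$ better than the stated bound, which is really just the Frobenius rate.
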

\begin{proof}
    The result is proved in Section~\ref{proof:psi_convergence_growing_m_and_n}.
\end{proof}

The orthogonal matrix $\Wmat$ accounts for the non-identifiability of the joint Euclidean mirror (\textit{cf.} Section~\ref{Sec:Methodology}). 
Given Theorem~\ref{conj:psi_convergence_growing_m_and_n}, which shows the uniform convergence of the estimated mirror to the true mirror at all observed points $x_i$, it remains to show the convergence of the estimated mirror $\hat f$ to a joint Euclidean mirror $f$ over the full space. If the estimated mirror $\hat f$ is constructed via the Delaunay interpolant as described in Section~\ref{Sec:Methodology}, the target joint Euclidean mirror $f$ is sufficiently smooth, and the parameter space is bounded and sufficiently well-sampled (as supposed in Assumptions~\ref{ass:lipschitz_mirror}, \ref{ass:bounded} and~\ref{cond:dense-m} respectively), then the convergence is established by Theorem~\ref{conj:joint-mirror} below. 

\begin{assumption} 
\label{ass:lipschitz_mirror}
Suppose that the joint Euclidean mirror $f$ is $C$-Lipschitz continuous for some constant $C>0$, such that for all $x,x^\prime\in\mathcal{X}$, we have:
\begin{equation}
\norm{f(x) - f(x^\prime)} \le C \norm{x - x^\prime}.
\end{equation}
\end{assumption}

\begin{assumption} \label{ass:bounded}
Suppose that $\mathcal{X}\subset\mathbb{R}^d$ is bounded.
\end{assumption}


\begin{assumption}\label{cond:dense-m}
    Suppose that the observed parameter values $x_1, \dots, x_m$ densely cover the parameter space $\mathcal{X}$ such that as $m$ increases, the convex hull $\mathcal{X}_m=\mathsf{CH}\{x_1, \dots, x_m\}$ converges to $\mathcal{X}$ and the maximum diameter of any simplex in the Delaunay triangulation of $x_1, \dots, x_m$ converges to zero. 
\end{assumption}


\begin{theorem}
\label{conj:joint-mirror}
    Let $\mathcal{X}_m=\mathsf{CH}\{x_1, \dots, x_m\}$ and let $\hat f_{m,n}$ denote the estimated mirror produced by Algorithm~\ref{Algo:JointMirror} using the Delaunay linear interpolant. If the pair $(\mathcal{F}, \mathcal{D})$ is exactly Euclidean $c$-realizable and Assumptions~\ref{cond:m-n-growth}--\ref{cond:dense-m} 
    are satisfied, then there exists a sequence of mirrors $f_{m,n}$ each satisfying Definition~\ref{def:Mirror}, such that for any $\varepsilon>0$:
    \begin{equation}
    \mathbb{P}\left\{\sup_{x\in \mathcal{X}_m} \abs{\hat f_{m,n}(x) - f_{m,n}(x)} > \varepsilon \right\} \rightarrow 0.
    \end{equation}
\end{theorem}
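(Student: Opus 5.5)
The plan is to construct the target mirror $f_{m,n}$ by aligning a fixed true mirror to the estimated CMDS configuration. Then split the error on $\mathcal{X}_m$ into a statistical term at the vertices and an interpolation term inside each simplex. By exact Euclidean $c$-realizability there is a continuous mirror $f$, which is $C$-Lipschitz by Assumption~\ref{ass:lipschitz_mirror}.

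\textbf{Step 1: realize $\Psimat$ by the mirror.} The rows $f(x_1),\dots,f(x_m)$ have pairwise distances $(\bDelta)_{ij}$. Classical MDS of an exactly Euclidean distance matrix in dimension $c$ recovers the configuration up to a rigid motion. Hence there are $\mathbf{Q}_m\in\mathbb{O}(c)$ and $\boldsymbol{a}_m\in\mathbb{R}^c$ with
\[
(\Psimat)_i=\mathbf{Q}_m f(x_i)+\boldsymbol{a}_m \quad\text{for all } i\in[m].
\]
Define
\[
f_{m,n}(x):=\Wmat^{\top}\bigl(\mathbf{Q}_m f(x)+\boldsymbol{a}_m\bigr),
\]
where $\Wmat$ is the orthogonal matrix from Theorem~\ref{conj:psi_convergence_growing_m_and_n}. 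As a composition of $f$ with a Euclidean isometry, $f_{m,n}$ satisfies Definition~\ref{def:Mirror} and is still $C$-Lipschitz. By construction $f_{m,n}(x_i)=\Wmat^{\top}(\Psimat)_i$.

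\textbf{Step 2: error at the vertices.} By Theorem~\ref{conj:psi_convergence_growing_m_and_n}, with overwhelming probability
\[
\max_i \norm{\tilde f(x_i)-f_{m,n}(x_i)} = \norm{\Psimathat\Wmat-\Psimat}_{\tti} \le K m^{1/2}\left(\frac{\log^2 n}{n}\right)^{1/q}.
\]
The right-hand side tends to $0$ under Assumption~\ref{cond:m-n-growth}, since it is at most $K$ times $m(\log^2 n/n)^{1/q}\to 0$.

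\textbf{Step 3: interpolation on each simplex.} Fix $x\in\mathcal{X}_m$ in its Delaunay simplex $\mathcal{P}_x$ with barycentric weights $\lambda_j(x)$. Since the $\lambda_j(x)$ are nonnegative and sum to one,
\[
\hat f_{m,n}(x)-f_{m,n}(x)=\sum_j \lambda_j(x)\bigl[\tilde f(v_j)-f_{m,n}(v_j)\bigr]+\sum_j\lambda_j(x)\bigl[f_{m,n}(v_j)-f_{m,n}(x)\bigr].
\]
By convexity the first sum is bounded by the vertex error of Step 2. The second is bounded by $C\,\mathrm{diam}(\mathcal{P}_x)\le C\,\delta_m$, where $\delta_m$ is the maximal simplex diameter; $\delta_m\to 0$ by Assumption~\ref{cond:dense-m}. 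Both bounds are uniform in $x$, so
\[
\sup_{x\in\mathcal{X}_m}\norm{\hat f_{m,n}(x)-f_{m,n}(x)} \le K m^{1/2}\left(\frac{\log^2 n}{n}\right)^{1/q}+C\delta_m
\]
on an event of overwhelming probability. For any $\varepsilon>0$ the right side is eventually below $\varepsilon$, which gives the claimed convergence in probability. Assumption~\ref{ass:bounded} only ensures the setting is compact and well-posed.

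\textbf{Main obstacle.} The main obstacle is Step 1: making rigorous that the same orthogonal alignment applies to every vertex simultaneously. This requires the $\Psimat$ from CMDS to coincide with an isometric image of $\{f(x_i)\}$, not just to share its distances with it. I would settle this by noting that two configurations in $\mathbb{R}^c$ with identical distance matrices differ by a rigid motion, a standard Gram-matrix argument after centering. Assumption~\ref{assum:smallest-eigenvalue} guarantees rank $c$, so the CMDS output is exact.
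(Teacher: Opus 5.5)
Your proof is correct and follows the paper's argument: it uses the same barycentric split into a vertex error, bounded uniformly via Theorem~\ref{conj:psi_convergence_growing_m_and_n}, and an interpolation error, bounded by $C$ times the maximal Delaunay simplex diameter. Your explicit construction of $f_{m,n}$ as a rigid motion of a fixed mirror, aligned through CMDS and $\Wmat$, makes precise a step the paper leaves implicit; one harmless slip is that, with rows treated as column vectors, row $i$ of $\Psimathat\Wmat$ is $\Wmat^{\top}\tilde f(x_i)$, so the aligned mirror should be $\Wmat(\mathbf{Q}_m f(x)+\boldsymbol{a}_m)$ rather than $\Wmat^{\top}(\mathbf{Q}_m f(x)+\boldsymbol{a}_m)$.
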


\begin{proof}
The result is proved in Section~\ref{proof:joint-mirror}.
\end{proof}

It remains to be shown that the parameter recovery procedure based on $\hat f$ described in Algorithm~\ref{alg:Recovery} yields consistent estimates of the true parameter $x^\ast \in \mathcal{X}_m$, when the unlabeled samples are drawn from $F_{x^\ast} \in \mathcal{F}$.
This consistency is established by Theorem~\ref{conj:consistent_param_estim}. 

\begin{theorem}
\label{conj:consistent_param_estim}
    Suppose Assumptions \ref{cond:m-n-growth}--\ref{cond:dense-m} hold. Let $\hatx$ be the estimated parameter value produced by Algorithm \ref{alg:Recovery} for a collection of responses sampled from 
    $F_{x^\ast}\in\mathcal{F}$, for a fixed but unknown parameter value $x^\ast \in \mathcal{X}_m$. Suppose that the dimension of the mirror is selected to be equal to that of the parameter space $\mathcal{X}$, such that $c=d$. If the joint Euclidean mirror 
    $f:\mathbb{R}^d \to \mathbb{R}^d$ is invertible with Jacobian matrix $\mathcal{J}_f(x) \in \mathbb{R}^{d \times d}$,
    such that the smallest singular value 
    satisfies  $\sigma_{\mathrm{min}}\{\mathcal{J}_f(z)\} \geq a$ for some $a > 0$ 
    for all $z \in \mathcal{X}$,
    then for any $\varepsilon >0$: 
    \begin{equation}
    \mathbb{P}\left\{\norm{x^\ast - \hat x_{m,n}} > \varepsilon \right\} \rightarrow 0.
    \end{equation}
\end{theorem}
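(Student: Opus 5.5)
Write $\hat f^\ast := (\hat{\boldsymbol\Psi})_{m+1}$ and $\hat f := \hat f_{m,n}$ for the interpolant built from the first $m$ rows of $\hat{\boldsymbol\Psi}$. The plan is to convert closeness in mirror space into closeness in parameter space through a quantitative inverse bound. The bound comes from the lower bound on $\sigma_{\min}\{\mathcal{J}_f\}$: the map $f$ should be locally bi-Lipschitz with constant $a$, i.e. $\norm{f(x)-f(x')}\ge a'\norm{x-x'}$ at least for $x,x'$ close.

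\textbf{Step 1 (uniform estimation on the augmented set).} Apply Theorem~\ref{conj:psi_convergence_growing_m_and_n} to the $m+1$ parameters $x_1,\dots,x_m,x^\ast$; the assumptions are unaffected by adding one point. This gives an orthogonal $\mathbf{W}$ and a row-realization $\boldsymbol\Psi_{m+1}$ of the true distances such that
\[
\max_{i\le m+1}\norm{(\boldsymbol\Psi_{m+1})_i-(\hat{\boldsymbol\Psi}\mathbf{W})_i}\to0
\]
with overwhelming probability. By exact realizability, $\boldsymbol\Psi_{m+1}$ coincides, up to a rigid motion, with the rows $f(x_1),\dots,f(x_m),f(x^\ast)$. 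Following the argument of Theorem~\ref{conj:joint-mirror}, and absorbing the isometry into a mirror $f_{m,n}$ (again a valid mirror satisfying Definition~\ref{def:Mirror}), we get two facts:
\[
\sup_{x\in\mathcal{X}_m}\norm{\hat f(x)-f_{m,n}(x)}\to0 \quad\text{and}\quad \norm{\hat f^\ast-f_{m,n}(x^\ast)}\to0
\]
in probability. The first uses the Delaunay interpolation error together with Assumptions~\ref{ass:lipschitz_mirror} and~\ref{cond:dense-m}.

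\textbf{Step 2 (small residual at the minimizer).} Write $\delta_{m,n}$ for the sum of the two errors in Step 1, so $\delta_{m,n}\to0$ in probability. Since $x^\ast\in\mathcal{X}_m$ is feasible,
\[
\norm{\hat f(\hat x_{m,n})-\hat f^\ast}\le\norm{\hat f(x^\ast)-\hat f^\ast}\le\delta_{m,n}.
\]
The triangle inequality then gives $\norm{f_{m,n}(\hat x_{m,n})-f_{m,n}(x^\ast)}\le2\delta_{m,n}$. Because $f_{m,n}=\mathbf{Q}f+\mathbf{b}$ for some rigid motion, this is the same as
\[
\norm{f(\hat x_{m,n})-f(x^\ast)}\le 2\delta_{m,n}.
\]

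\textbf{Step 3 (inverting $f$), the main obstacle.} We must show that a small mirror discrepancy forces a small parameter discrepancy, uniformly over $\mathcal{X}_m$. Note that $f$ being an isometry onto its image in the sense $\norm{f(x)-f(x')}=\mathcal{D}(F_x,F_{x'})$ does not by itself relate to $\norm{x-x'}$.

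I would argue by compactness. Take the closure $\bar{\mathcal{X}}$, which is compact by Assumption~\ref{ass:bounded}, and extend $f$ continuously using the Lipschitz property. For fixed $\varepsilon>0$, the set $\{x\in\bar{\mathcal{X}}:\norm{x-x^\ast}\ge\varepsilon\}$ is compact. The continuous function $x\mapsto\norm{f(x)-f(x^\ast)}$ is strictly positive there, since $f$ is injective; this requires injectivity on the closure, which I would take as part of invertibility. So it attains a minimum $\eta(\varepsilon)>0$.

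The Jacobian condition gives a quantitative version locally. By the inverse function theorem, $f$ is a local diffeomorphism with $\norm{f(x)-f(x^\ast)}\ge (a/2)\norm{x-x^\ast}$ near $x^\ast$, which yields explicit rates when $\varepsilon$ is small. In either form, $\norm{x^\ast-\hat x_{m,n}}>\varepsilon$ implies $2\delta_{m,n}\ge\eta(\varepsilon)$. Hence
\[
\mathbb{P}\{\norm{x^\ast-\hat x_{m,n}}>\varepsilon\}\le\mathbb{P}\{\delta_{m,n}\ge\eta(\varepsilon)/2\}\to0.
\]

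Two technical points need care. First, the alignment matrix must be common to the labeled rows and the unlabeled row; this is why the CMDS is performed jointly on $m+1$ points. Second, Assumption~\ref{assum:smallest-eigenvalue} must still hold for the augmented matrix, which I would check by eigenvalue interlacing: adding one row and column perturbs the eigenvalues by at most $O(D_{\max}^2)$.
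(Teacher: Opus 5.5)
Your proof is correct. Steps~1--2 follow the paper's decomposition. The paper also splits the mirror-space error into three parts: the interpolation error at $\hatx$, the error of the unlabeled row $\hat f^\ast$, and the minimization residual $\eta$. It bounds $\eta$ through feasibility of $x^\ast$, exactly as you do. You are more careful than the paper in two places it leaves implicit. First, you run Theorem~\ref{conj:psi_convergence_growing_m_and_n} on the augmented $(m+1)$-point configuration, so one alignment serves both the labeled rows and $\hat f^\ast$. Second, you track the rigid motion relating $f_{m,n}$ to $f$, which the paper conflates.

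(A small correction on Assumption~\ref{assum:smallest-eigenvalue} for $m+1$ points: interlacing is not quite the right tool, because the centering changes. Instead, the nonzero eigenvalues of $\Bmat$ are those of the $c\times c$ centered scatter matrix of the mirror points. Adding a point gives that matrix a positive semidefinite rank-one update, so $\lambda_c$ cannot decrease.)

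The genuine difference is Step~3, the inversion. The paper applies a mean value inequality to $f^{-1}$ with $\norm{\mathcal{J}_{f^{-1}}}\le 1/a$. This gives $\norm{\hatx-x^\ast}\le a^{-1}$ times the mirror-space errors, and hence a rate inherited from Theorems~\ref{conj:psi_convergence_growing_m_and_n} and~\ref{conj:joint-mirror}. However, that step needs the singular-value bound at the preimages of all points on the segment joining $f(\hatx)$ and $f(x^\ast)$. Those preimages need not lie in $\mathcal{X}$, nor in the ball $\mathcal{B}_\delta(x^\ast)$ the paper invokes. So it tacitly requires something like convexity of $f(\mathcal{X})$, or the bound on all of $\mathbb{R}^d$.

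Your compactness argument uses only continuity and injectivity of $f$ on a compact set. It is airtight for consistency, and it shows the Jacobian hypothesis is needed only for rates. The price is that $\eta(\varepsilon)$ is non-explicit, so you get no rate.

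One small adjustment: $\hatx$ ranges over $\mathcal{X}_m\subseteq\mathsf{CH}(\bar{\mathcal{X}})$, not necessarily over $\bar{\mathcal{X}}$. Run the compactness argument on the compact set $\mathsf{CH}(\bar{\mathcal{X}})$. There $f$ is already defined, continuous and injective, because $f:\mathbb{R}^d\to\mathbb{R}^d$ is assumed invertible, so no Lipschitz extension is needed.
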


\begin{proof}
    The result is proved in Section~\ref{proof:consistent_param_estim}. 
\end{proof}

All theoretical results are visually summarized in Figure~\ref{fig:diagram}, which expresses the entire estimation and recovery procedure proposed in Algorithms~\ref{Algo:JointMirror} and~\ref{alg:Recovery}, and its asymptotic properties. 

\begin{figure}[t]
\centering
\includegraphics[width=0.475\textwidth]{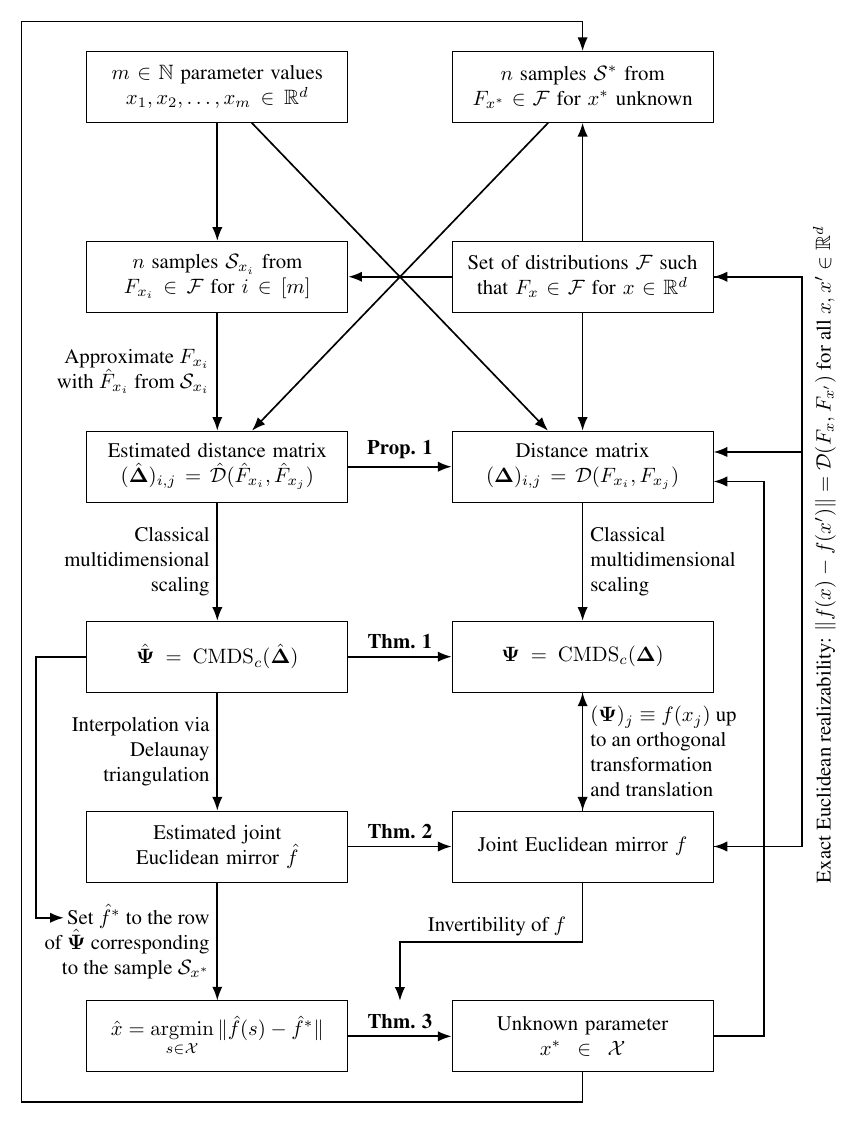}
\caption{Summary diagram of asymptotic theoretical properties of the joint Euclidean mirror estimation and parameter recovery procedures proposed in Algorithms~\ref{Algo:JointMirror} and~\ref{alg:Recovery}.}
\label{fig:diagram}
\end{figure}



\section{Illustrative examples}
\label{sec:Illustrations}

\subsection{Mirror estimation}
\label{sec:normal-mirrors}

To illustrate the application of Algorithm \ref{Algo:JointMirror}, we present an example in which a mirror $f$ exists and is known. Let $\mathcal{X} \subset \mathbb{R}^2$ be the $[1,10] \times [1,10]$ plane and define $\mathcal{F}$ to be the set of normal distributions parameterized by $x$ such that $F_x \sim \mathcal{N}(\mu_x, 1)$, where $\mu_x = 0.1 \, \norm{x - (5.5,5.5)}^2$. In this example, we set $m=100$ and take $\x_1, \dots, \x_m \in \mathcal{X}$ to be the set of points on the integer values of the $[1,10]\times [1,10]$ grid. For two normal distributions with equal variance, the Wasserstein 1-distance between the distributions is equal to the absolute value of the difference between their location parameters, resulting in 
$W_1(F_{x},F_{x^\prime})=|\mu_x - \mu_{x^\prime}|$. Therefore, if we take $\mathcal{D}$ to be the Wasserstein 1-distance, it follows that the pair $(\mathcal{F}, \mathcal{D})$ is Euclidean 1-realizable with mirror  $f:\mathbb{R}^2\rightarrow\mathbb{R}$ given by $f(x) = 0.1 \,\norm{x - (5.5,5.5)}^2$. 

According to the theoretical results in Section~\ref{Sec:Theory}, we expect that, by observing a sample of size $n$ from each of the $m$ distributions, Algorithm~\ref{Algo:JointMirror} will produce a surface that converges to a valid mirror. More specifically, the estimated mirror function $\hat f_{m,n}(x)$ will converge to a rigid transformation of the function $f(x)$ as the number of observations $n$ and parameter values $m$ increase. In Figure~\ref{fig:normal-surfaces} we generate samples from these distributions and present the estimated surfaces from the experimental setting where $m=100$ is fixed and $n$ increases. The resulting error in recovering the joint Euclidean mirror at the locations of the observed parameter values is plotted in Figure~\ref{fig:simulation-error}. 
By increasing $n$ from 10 to 500 we see that the estimated mirror $\hat f_{m,n}$ takes on the parabolic shape of the true mirror $f$. Increasing the value of $n$ leads to increased accuracy of the mirror $\hat f_{m,n}(x)$ at each of the $m$ points that we observe, whereas increasing the value of $m$ would produce figures where we obtain estimates of the mirror $f(x)$ evaluated on a finer mesh of points. 



\begin{figure*}
     \centering
     \begin{subfigure}[b]{0.24\textwidth}
         \centering
         \caption{$n=10$}
         \label{fig:normal-surface-10}
         \includegraphics[width=\textwidth]{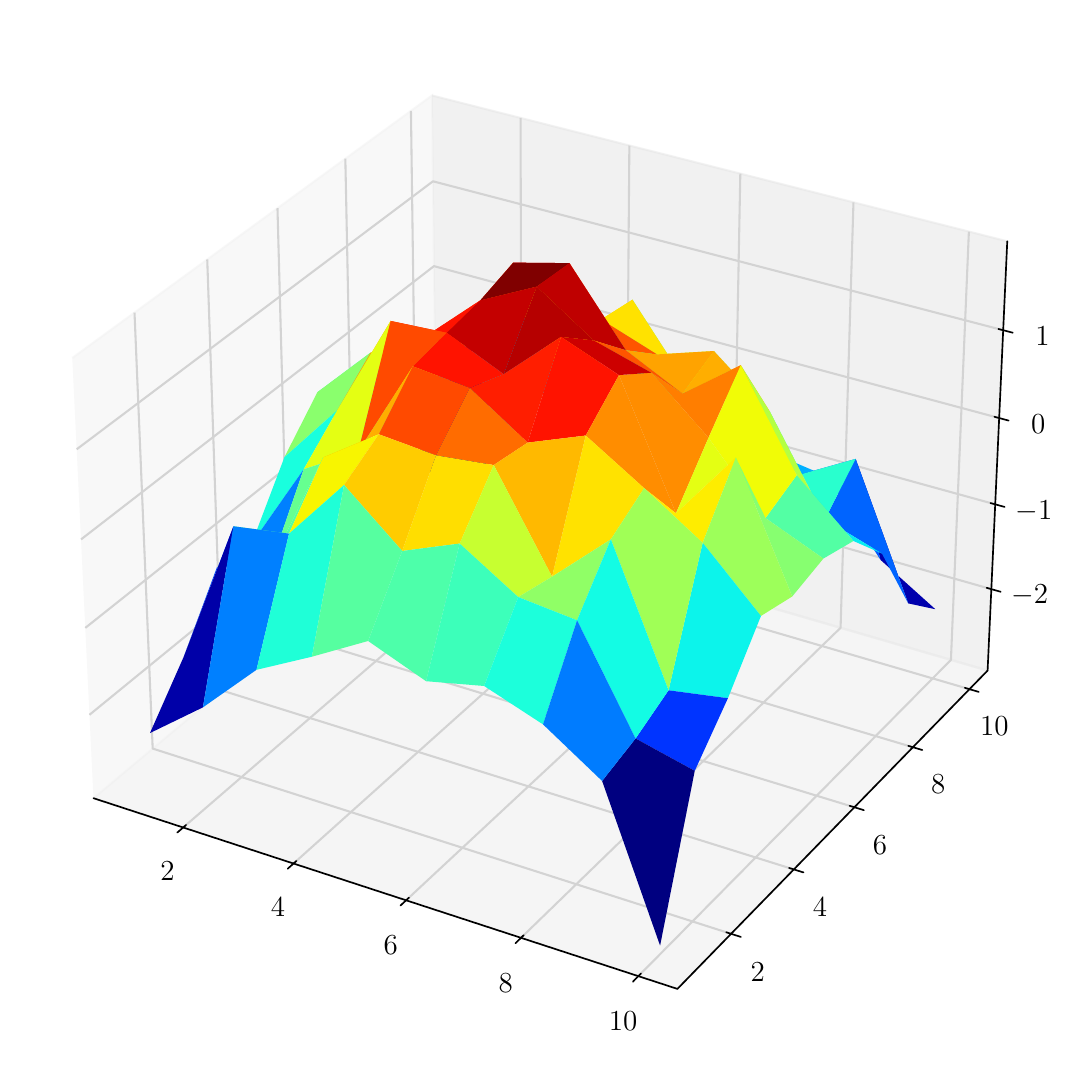}
     \end{subfigure}
     \hfill
     \begin{subfigure}[b]{0.24\textwidth}
         \centering
         \caption{$n=50$}
         \label{fig:normal-surface-50}
         \includegraphics[width=\textwidth]{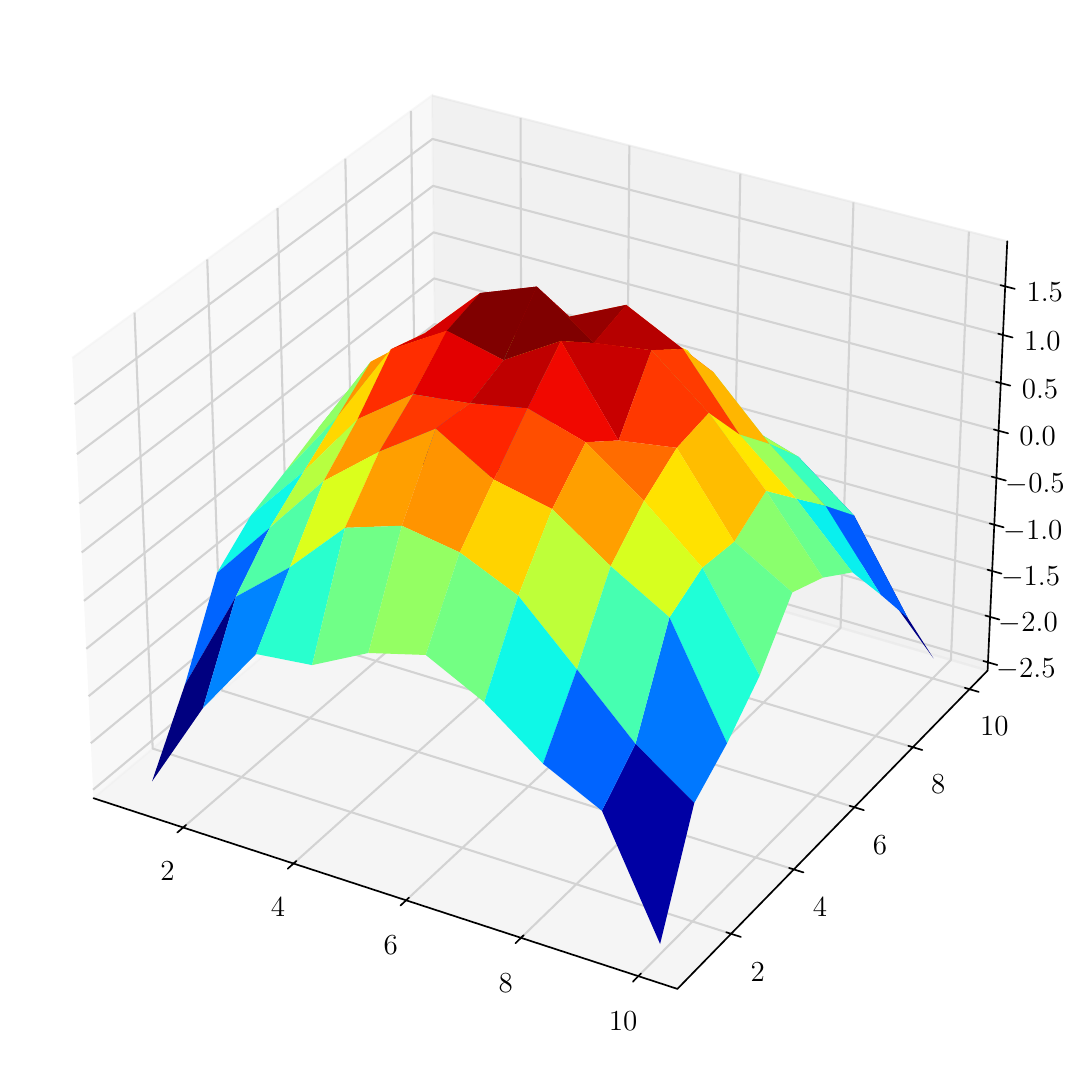}
     \end{subfigure}
     \begin{subfigure}[b]{0.24\textwidth}
         \centering
         \caption{$n=100$}
         \label{fig:normal-surface-100}
         \includegraphics[width=\textwidth]{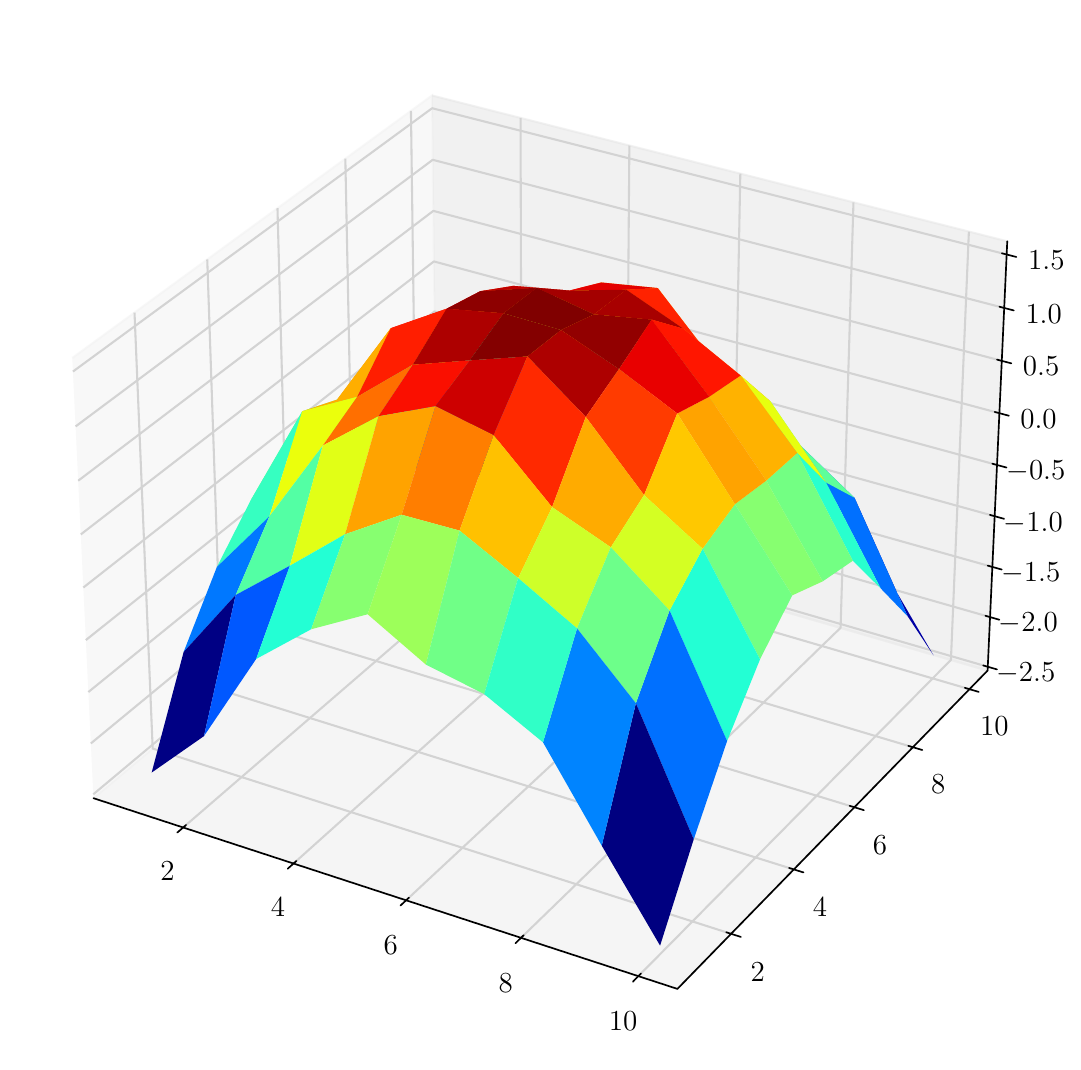}
     \end{subfigure}
     \hfill
     \begin{subfigure}[b]{0.24\textwidth}
         \centering
         \caption{$n=500$}
         \label{fig:normal-surface-500}
         \includegraphics[width=\textwidth]{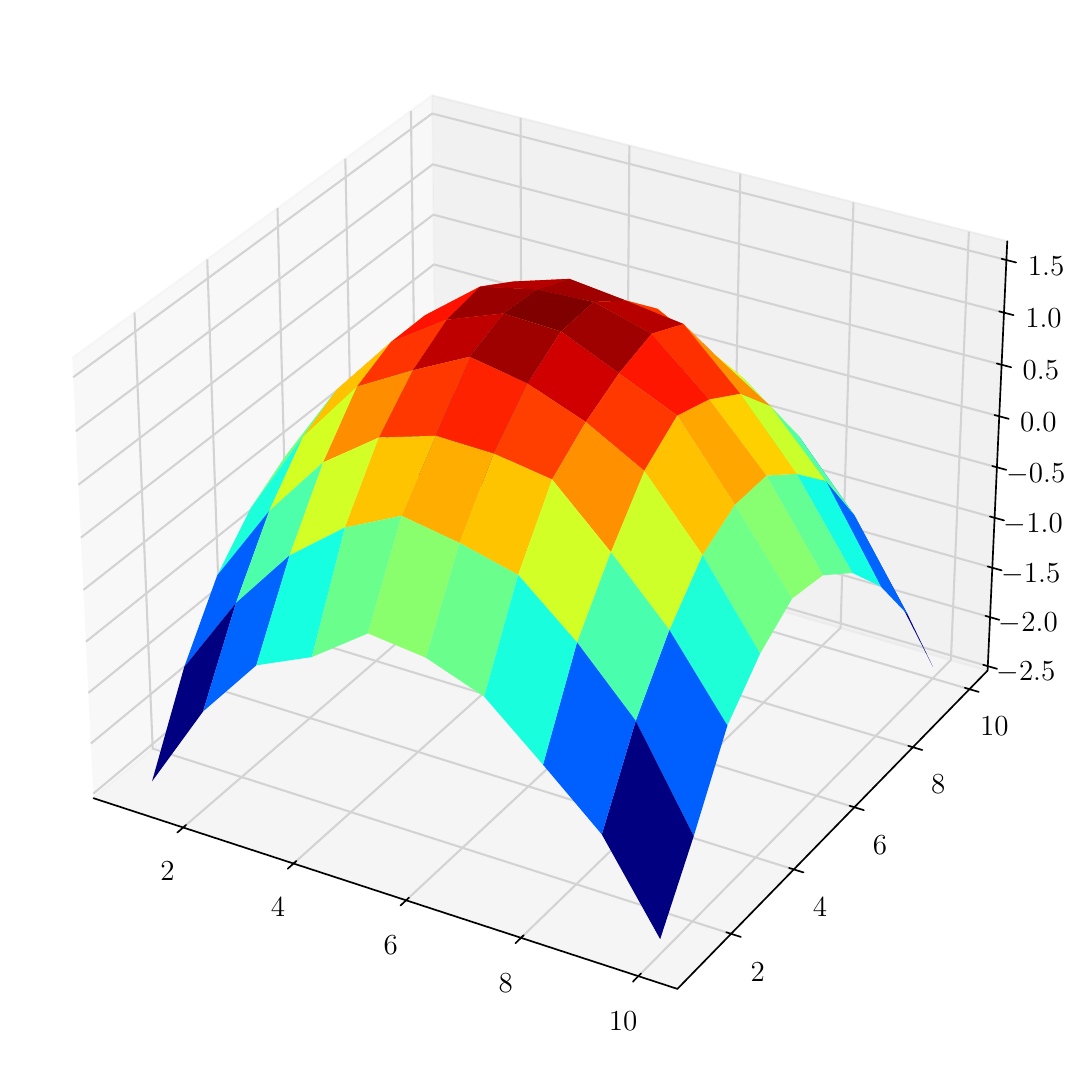}
     \end{subfigure}
\caption{Estimated joint Euclidean mirrors based on a sample of size $n$ for the simulation with Gaussian distributions in  Section~\ref{sec:normal-mirrors}.}
\label{fig:normal-surfaces}
\end{figure*}


\begin{figure}
    \centering
    \includegraphics[width=0.95\linewidth]{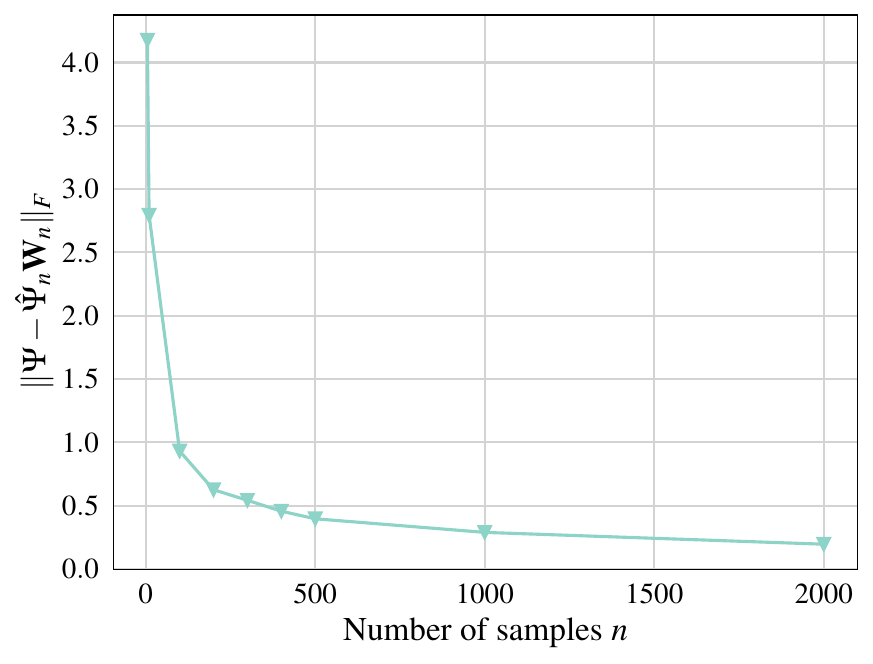}
    \caption{Error for the simulation described in Section \ref{sec:normal-mirrors} with increasing sample size $n$ from each observed distribution.}
    \label{fig:simulation-error}
\end{figure}

\subsection{Parameter recovery}
\label{sec:Simulated_recovery}

In order to demonstrate the consistency of the parameter estimation procedure described in Algorithm~\ref{alg:Recovery} and demonstrate the results outlined in Theorem~\ref{conj:consistent_param_estim}, we consider an example with simulated data similar to that presented in Section~\ref{sec:normal-mirrors}. 
In this example, we once again consider normal distributions and define $\mathcal{X} \subset \mathbb{R}^2$ to be the $[0,1] \times [0,1]$ plane. 
We construct a Euclidean $2$-realizable example by defining $F_x\sim\mathcal{N}(\mu_x,\sigma_x)$ for $x=(x_1,x_2)\in\mathcal{X}$, with mean and standard deviation of each distribution taking the form $\mu_x = 2(0.1+x_1)^2$ and $\sigma_x  = 2(0.1+x_2)^2$ for $i \in [m]$. 
In this case, the Wasserstein 2-distance takes the form $W_2(F_x, F_{x^\prime}) = [(\mu_x-\mu_{x^\prime})^2+(\sigma_{x} - \sigma_{x^\prime})^2]^{1/2}$. Hence, if $\mathcal{D}$ is set to the Wasserstein 2-distance, $(\mathcal F,\mathcal D)$ is Euclidean 2-realizable with mirror given by $f(x) = [2(0.1+x_1)^2, 2(0.1+x_2)^2]$.
As in the example presented in Section~\ref{sec:normal-mirrors} we set $m=100$ with observed points equally spaced along both 
dimensions of $\mathcal{X}$. 

In order the assess the effectiveness of the parameter recovery algorithm, we perform a leave-one-out analysis, where one parameter value $x\in\mathcal\{x_1,\dots,x_m\}$ is deleted, and an estimated mirror is generated from the remaining samples is used to recover the deleted label via Algorithm~\ref{alg:Recovery}. This procedure is performed in several experimental settings, where 
we vary the sample size $n$ obtained from each of the observed distributions while $m=100$ remains constant. The results of the simulation are presented in Figure~\ref{fig:recovery-sim}, where we see that as $n$ increases from $10$ to $10000$, the accuracy of Algorithm~\ref{alg:Recovery} converges to near perfect recovery, as prescribed by Theorem~\ref{conj:consistent_param_estim}. 

The number of observed distributions $m$ required to obtain accurate estimates will largely depend on the complexity of the function $f$ as well as the distribution of the observed parameters. For the example presented here, using $m=100$ appears to be sufficient to generate very accurate estimates when $n$ is large. On the other hand, the number of samples $n$ required to obtain good estimates will depend largely on the complexity of the underlying distributions. In this illustrative case, 
$n=1000$ generates a very accurate representation of each distribution and therefore lends itself to very accurate parameter estimates, 
but even values of $n$ as small as $n=10$ appear to be informative despite generating noisier estimates. 

\begin{figure}[t]
    \centering
    \begin{subfigure}{0.225\textwidth}
        \centering
        \caption{$n=10$}
        \includegraphics[width=\linewidth]{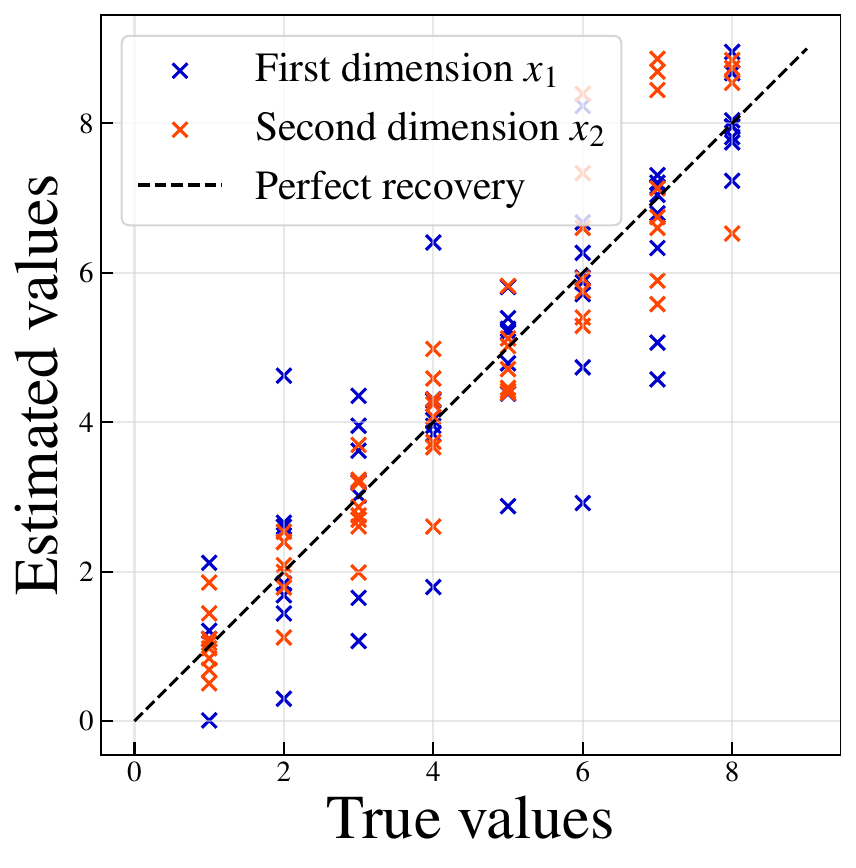}
    \end{subfigure}
    \hfill
    \begin{subfigure}{0.225\textwidth}
        \centering
        \caption{$n=100$}
        \includegraphics[width=\linewidth]{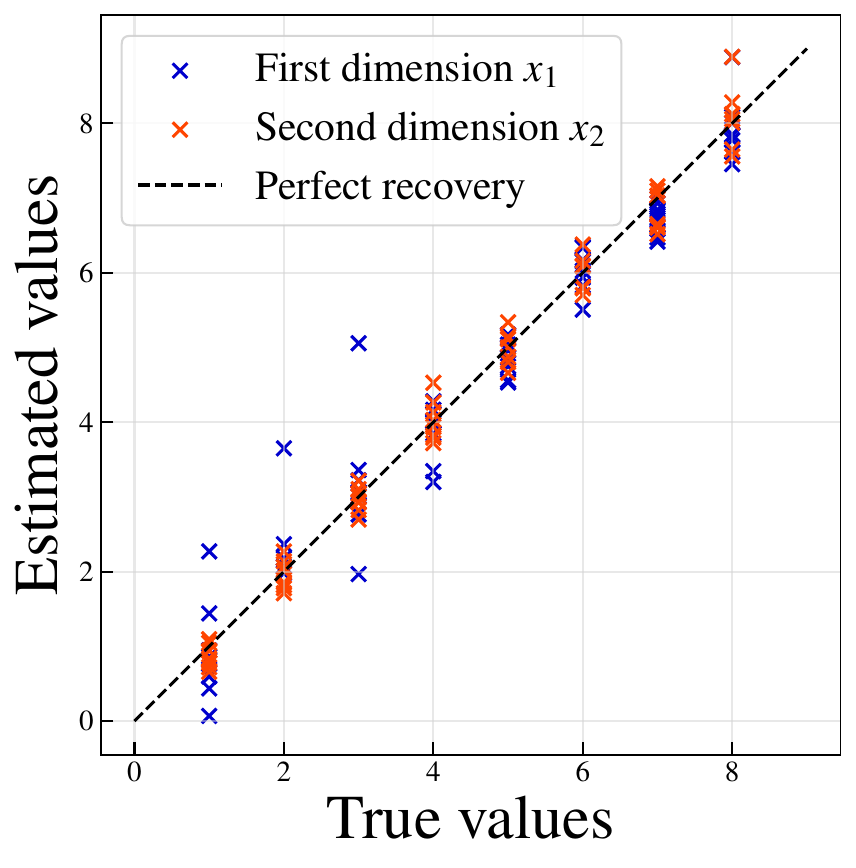}
    \end{subfigure}
    \begin{subfigure}{0.225\textwidth}
        \centering
        \caption{$n=1000$}
        \includegraphics[width=\linewidth]{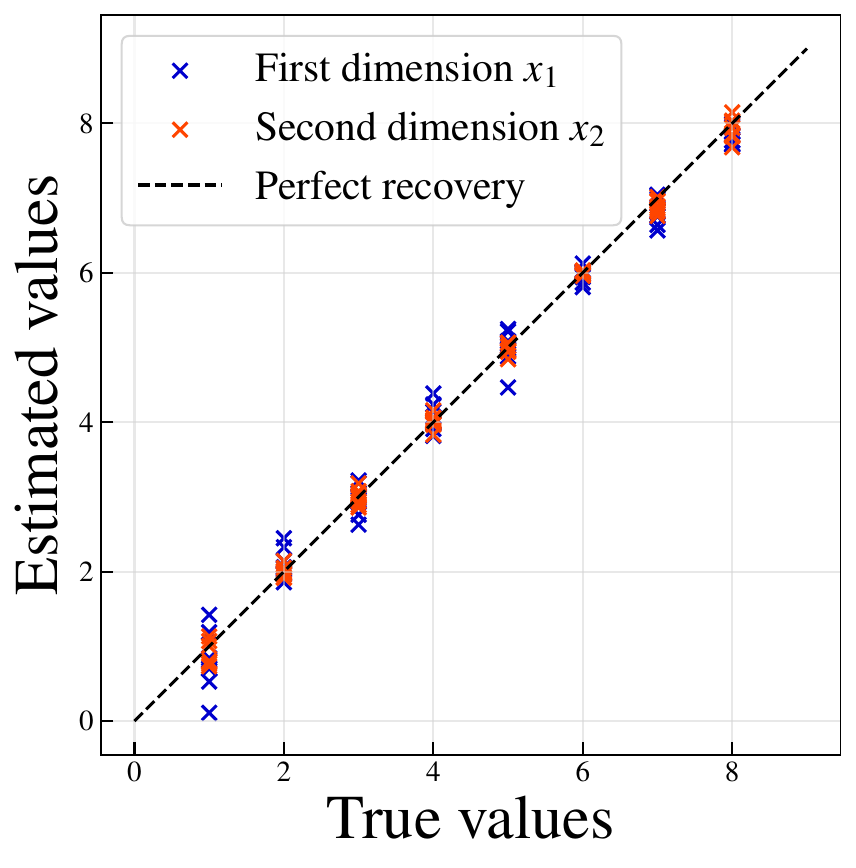}
    \end{subfigure}
    \hfill
    \begin{subfigure}{0.225\textwidth}
        \centering
        \caption{$n=10000$}
        \includegraphics[width=\linewidth]{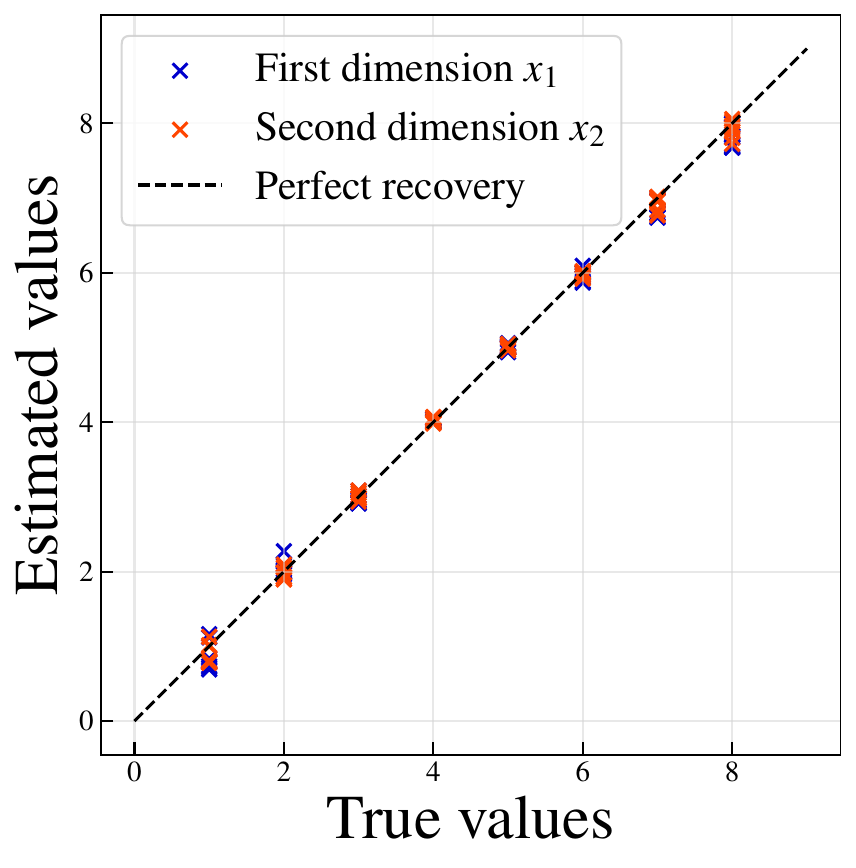}
    \end{subfigure}
    \caption{Parameter recovery performance with growing $n$ for the simulation with Gaussian distributions in Section~\ref{sec:Simulated_recovery}.}
    \label{fig:recovery-sim}
\end{figure}


\section{Application to LLM responses}
\label{sec:LLM-mirrors}

To demonstrate an application of Algorithm~\ref{Algo:JointMirror} to a setting in which the true mirror is unknown, we apply the mirror estimation procedure to a dataset of responses from large language models which were generated by querying LLMs with different temperature parameters with the prompt \textit{``Briefly describe R.A. Fisher’s work, in just two sentences, giving $w$\% weight to eugenics''}. In LLMs, the temperature is a parameter that can be adjusted to control the amount of randomness in the response generation process. Low temperature values tend to create more predictable responses while high temperatures allow for more randomness and produce more varied responses. Under this experimental setting, the sets of responses we generate are paramaterized by a two-dimensional parameter encoding both the temperature $t$ of the LLM as well as the weight parameter $w$ embedded within the prompt. We vary the weight parameter $w$ from $10\%$ to $90\%$ in increments of $10$ and the temperature $t$ from $0.1$ to $0.9$ in increments of $0.1$. We therefore observe data from a total of $m = 9 \times 9=81$ distinct parameter combinations, and for each of these combinations, we query the LLM $100$ times to generate $n=100$ samples. By embedding the resulting responses into the Euclidean space using \texttt{NomicEmbedv1.5} \citep{nussbaum2025nomic}, we can treat each response from the LLM as a sample from a distribution $F_x$ on the embedding space where $x \in \mathbb{R}^2$ encodes the LLM temperature $t$ and the prompt weight $w$. 

In this example, we take the Wasserstein 1-distance as our distance metric $\mathcal{D}$. Our mirror $f: \mathbb{R}^2 \to \mathbb{R}^c $ is therefore a function such that for all $x, \x^\prime \in \mathcal{X}$ the Euclidean distance between points of the mirror $\norm{f(x) - f(x^\prime)} $ provides a representation for the Wasserstein distance between the sets of embedded responses that were produced for different combinations of the weight $w$ and temperature $t$. 
In order to determine whether a set of distances is realizable by a low-dimensional manifold it is recommended to look for an elbow in the scree plot of the matrix of empirical pairwise distances $\boldsymbol{\hat \Delta}_{m,n}$ after double centering, plotted in Figure~\ref{fig:llm-scree}, following existing criteria for latent dimensionality selection \citep[see, for example][]{Zhu06}.
Figure~\ref{fig:llm-scree} provides empirical evidence that the assumption of exact Euclidean realizability, corresponding to isometric equivalence between $(\mathcal{F}, \mathcal{D})$ and $(\mathbb{R}^c, \mathcal{E})$, may be reasonable in this example: all eigenvalues of the estimated distance matrix are positive, 
implying that the distance matrix is exactly Euclidean. Moreover, the 
dimensionality selection criterion based on the scree plot \citep{Zhu06} suggests that the embedding dimension can be chosen to be 
relatively small ($c=5$) without losing substantial information.

\begin{figure}[t]
    \centering
    \includegraphics[width=0.95\linewidth]{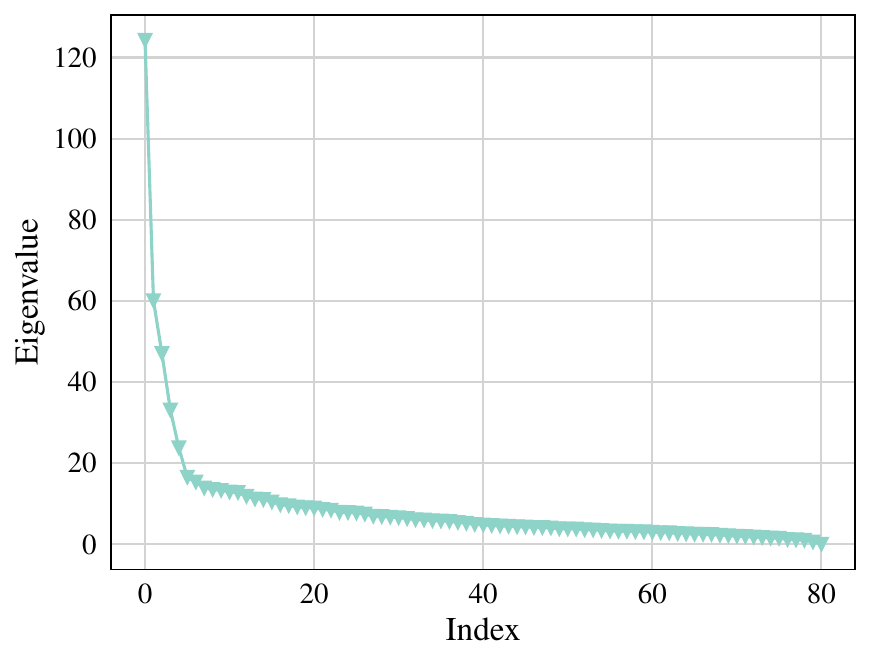}
    \caption{Eigenvalues of the doubly centered empirical distance matrix for the LLM application detailed in Section \ref{sec:LLM-mirrors}.}
    \label{fig:llm-scree}
\end{figure}

While the theory of Theorems~\ref{conj:joint-mirror} and~\ref{conj:consistent_param_estim} make use of linear interpolation via Delaunay triangulation, the choice of interpolation method used in Algorithm~\ref{Algo:JointMirror} is left open by design. Given the smoothness assumptions on the true mirror $f$ we find that both linear interpolation as well as B-splines \citep{eilers1996flexible}, which are smooth by construction, are a natural fit. To illustrate this choice, Figure~\ref{fig:surfaces} depicts the output of Algorithm \ref{Algo:JointMirror} for $c=1$ when using B-splines (Figure~\ref{fig:iso-surface-smooth}) as well as when using Delaunay linear interpolation (Figure~\ref{fig:iso-surface}). 

\begin{figure}[t!]
     \centering
     \begin{subfigure}[b]{0.425\textwidth}
         \centering
         \caption{Mirror constructed via Delaunay interpolation}
         \label{fig:iso-surface}
         \includegraphics[width=\textwidth]{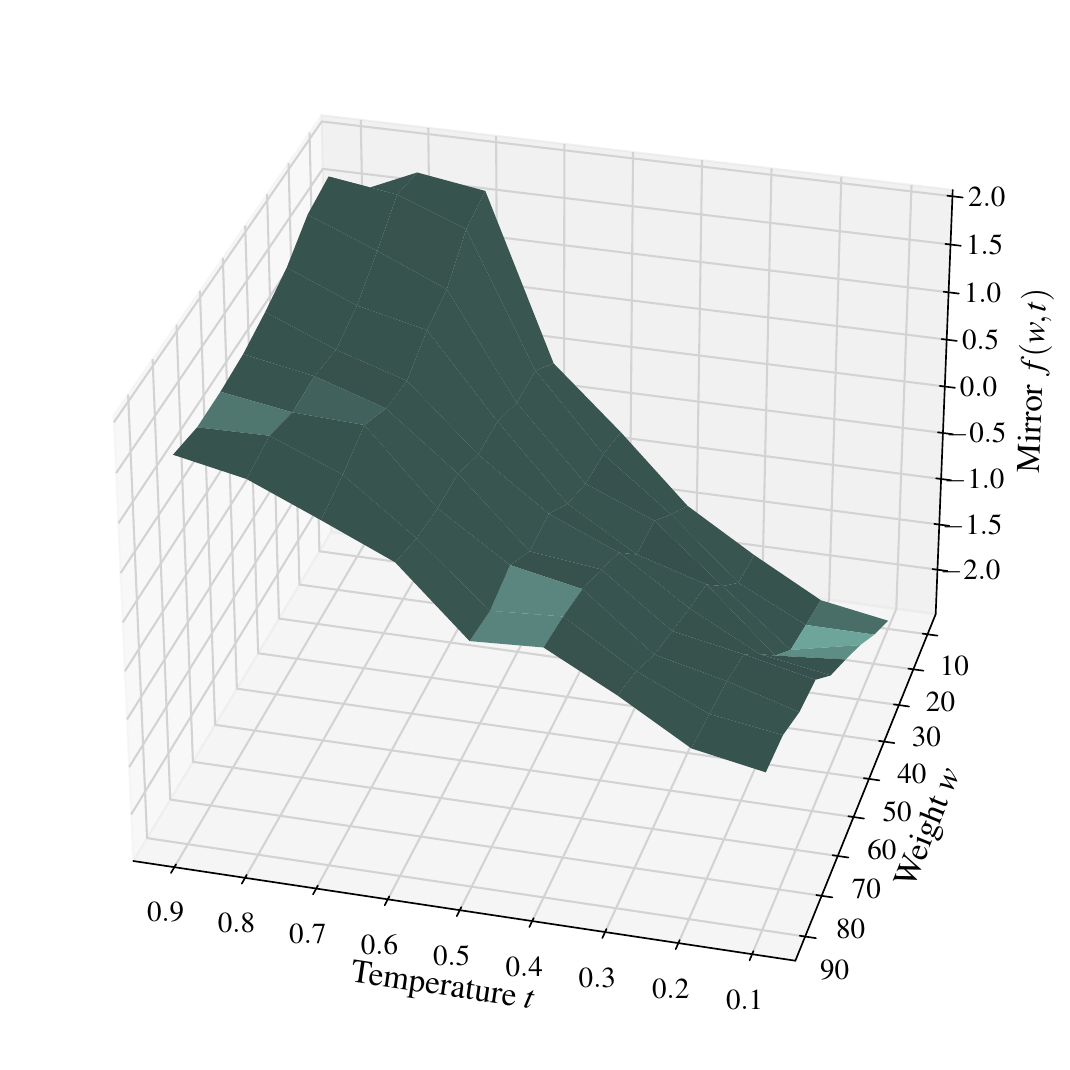}
     \end{subfigure}
     \hfill
     \begin{subfigure}[b]{0.425\textwidth}
         \centering
         \caption{Mirror constructed via B-splines}
         \label{fig:iso-surface-smooth}
         \includegraphics[width=\textwidth]{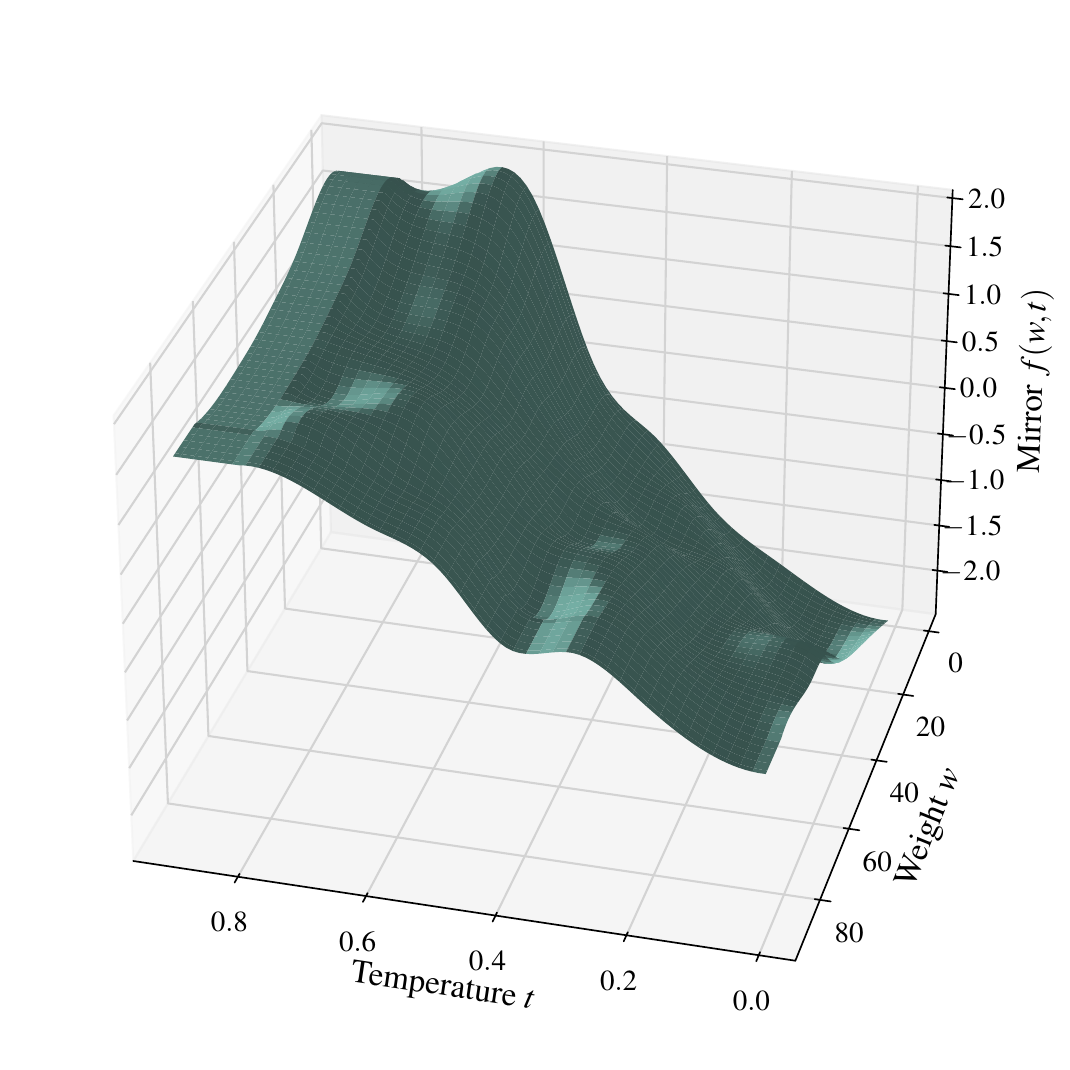}
     \end{subfigure}
        \caption{Estimated mirrors via (a) Delaunay interpolation and (b) B-splines for the application with LLMs in Section~\ref{sec:LLM-mirrors}.}
\label{fig:surfaces}
\end{figure}

\paragraph{Univariate joint Euclidean mirror.} In this example, as we do not have knowledge of the underlying distributions, we can not be certain that a mirror exists or that our sample size of $n=100$ is large enough to produce estimates of the mirror that provide an accurate representation of this structure. However, the estimated mirror does appear to show latent structure in the distribution of LLM responses. In particular, we see a well-ordered monotonic relationship between the estimated mirror and changes to both temperature and weight suggesting that changes to these parameters shift the distribution of responses in a consistent way that is measurable via the Wasserstein distance of the embedded responses. By examining the surfaces in Figure \ref{fig:surfaces}, we gain some understanding of the sensitivity of LLM responses to these changes in parameterization. For example, we see that the impact of changes in temperature is most pronounced between $0.4$ and $0.6$, and less consequential at the extreme values. In the setting where $\mathrm{dim}(\mathcal{X})>2$ or $c>1$ such an intuitive visualization of the mirror will not be possible. However, the full joint Euclidean mirror structure can still be effectively used for clustering of similar models as well as anomaly detection tasks. Meanwhile, marginal mirrors considering only one of the latent dimensions can be constructed from the full mirror to provide some visual intuition about the relationship between parameters. 
 
\begin{figure}[t]
     \centering
     \begin{subfigure}[b]{0.45\textwidth}
         \centering
         \caption{Joint mirror values, colored according to the temperature $t$}
         \label{fig:2d-mirrorT}
         \includegraphics[width=\textwidth]{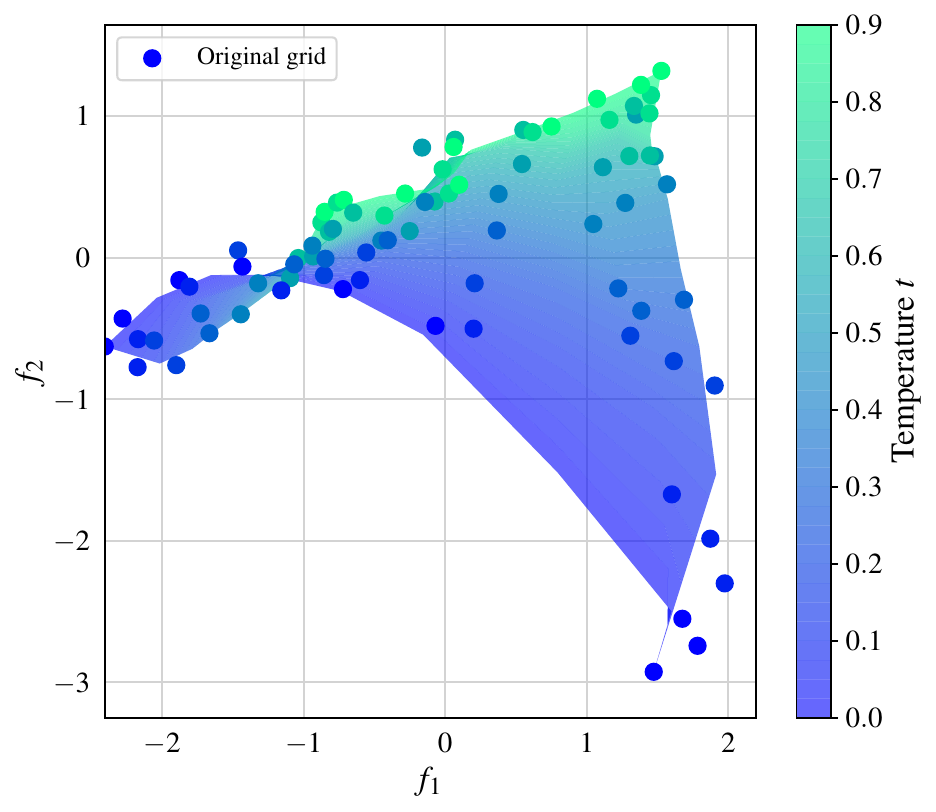}
         
     \end{subfigure}
     \hfill
     \begin{subfigure}[b]{0.45\textwidth}
         \centering
         \caption{Joint mirror values, colored according to the weight $w$}
         \label{fig:2d-mirrorW}
         \includegraphics[width=\textwidth]{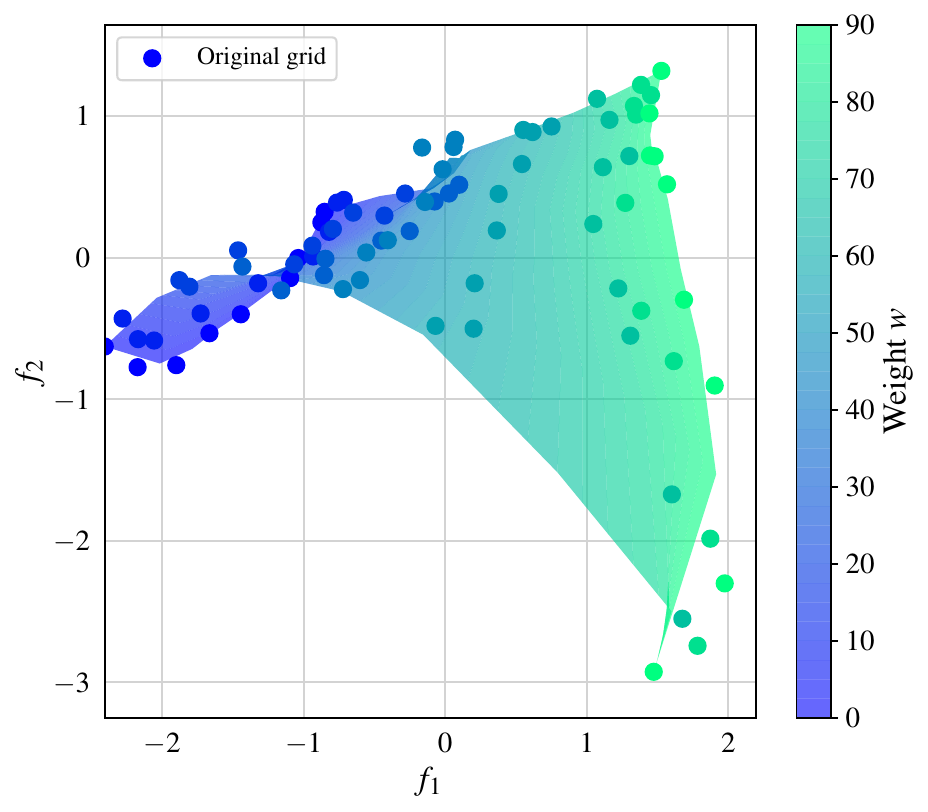}
         
     \end{subfigure}
        \caption{Scatterplot of the estimated two-dimensional mirror for the LLM example in Section \ref{sec:LLM-mirrors}, colored by the parameters.}
\label{fig:2dsurfaces}
\end{figure}

\paragraph{Multivariate joint Euclidean mirror.}
If we instead construct the mirror for the LLM case study 
using a mirror dimension of $c=2$, we obtain a collection of points in $\mathbb{R}^2$ encoding the distances between LLM responses, each corresponding to a particular weight and temperature pair. By interpolating these points we can construct a surface encoding the distance between every possible combination of parameters. In order to visualize this surface, we can color it such that each of the points with the same value of temperature are the same color or such that all points with a shared weight receive the same coloring. The plots depicting each of these colorings are presented in Figure~\ref{fig:2dsurfaces} and crucially, we see that the variation of weight and temperature across the surface occur on different axes. This suggests that the effects of varying temperature and weight shift the distribution of the outputs in ways that are roughly orthogonal to each other. 


\paragraph{Parameter recovery.} Given the fact that our parameters vary smoothly and monotonically over the embedding space and that each of them vary along distinct dimensions it follows that each point on the mirror surface must correspond to a distinct combination of parameters $w$ and $t$. Therefore, if we can place a collection of responses from an LLM on this surface, it should then be possible to approximately recover the weight and temperature used to generate these responses, via Algorithm~\ref{alg:Recovery}. 
In order to test this procedure we perform a type of leave-one-out validation using the LLM data, similarly to Section~\ref{sec:Simulated_recovery}. 
We delete the parameter labels from a single set of LLM responses and then make use of Algorithm~\ref{alg:Recovery} to recover these labels. 
The resulting estimates are plotted against the true parameter values in Figure~\ref{fig:recovery}. We see that while the estimates display a meaningful levels of both bias and variance, there is a robust linear relationship between the true values and the estimated values. From Theorem~\ref{conj:consistent_param_estim}, we expect that as both $m$ and $n$ increase, the accuracy of our parameter estimates will improve, assuming that the true mirror $f$ is well-behaved, 
as demonstrated on the simulated dataset in Section~\ref{sec:Simulated_recovery}.

\begin{figure}[t]
     \centering
     \includegraphics[width=0.45\textwidth]{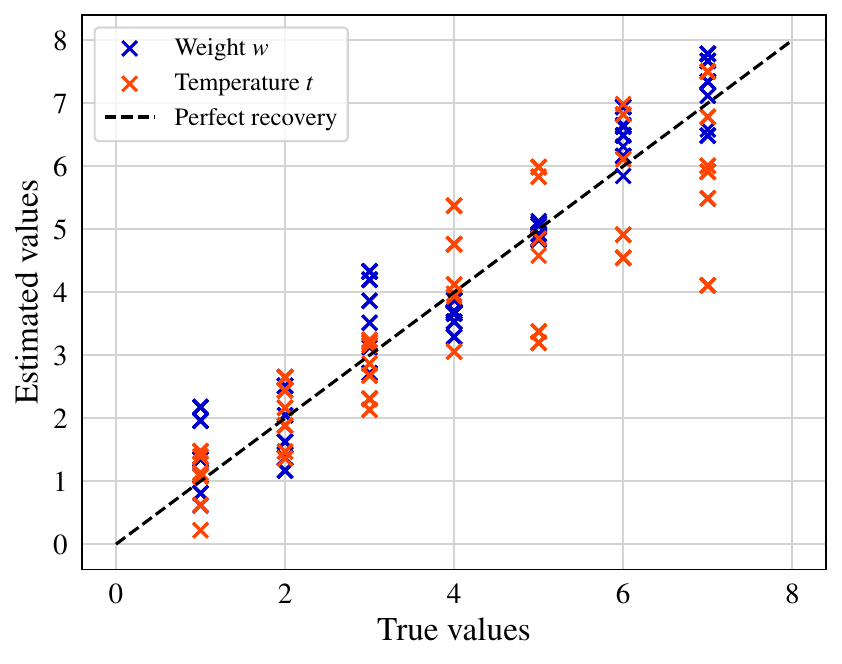}
    \caption{Performance of the leave-one-out parameter recovery procedure 
    on the LLM example described in Section \ref{sec:LLM-mirrors}.}
\label{fig:recovery}
\end{figure}

\section{Discussion}
\label{sec:Discussion}

The ability to recover latent structure in the distances between probability distributions is a concept with broad applications, including the representation of differences between distinct LLMs. In this work, we have introduced an algorithm to achieve this goal, which generalizes the concept of the Euclidean mirror, combining it with an underlying parametrization of probability distributions. Under the framework detailed in Sections~\ref{sec:mirror-background} and~\ref{Sec:Methodology}, we show that, when the distances between a set of distributions admits a low-dimensional representation, Algorithm~\ref{Algo:JointMirror} is able to produce a consistent estimate of this representation, given only a sample from a subset of these distributions. In a simulated dataset, where such a low-dimensional structure exists by design, we demonstrate that it can be successfully recovered (\textit{cf.} Section~\ref{sec:normal-mirrors}). When applying the methodology to an empirical dataset in Section~\ref{sec:LLM-mirrors}, we find evidence of latent structure in the Wasserstein distances between embedded responses from different LLMs, suggesting that this methodology could be effectively applied to understand which model attributes most significantly impact LLM output, which could possibly be used to predict the qualities of a model with attributes that we have not yet observed. We also demonstrate how the proposed procedure can be used to derive consistent estimates of model parameters from unlabeled samples, providing a method for \textit{parameter recovery}. In an application with LLMs, we show the presence of clear relationships between varying parameter values and dimensions of the latent space, which suggest that our algorithm provides an interesting and viable framework to analyze and compare the output of differently parameterized LLMs. 

\section*{Code}
Code to implement the methods proposed in this work, and reproduce the simulated experiments, is available in the Github repository \href{https://github.com/fraspass/llm_mirror} {\texttt{fraspass/llm\_mirror}}.

\section*{Acknowledgments}
This work was supported by the Engineering and Physical Sciences Research Council (EPSRC), grant number EP/Y002113/1, 
Office of Naval Research (ONR) Science of Autonomy award number N00014-24-1-2278,
Defense Advanced Research Projects Agency (DARPA) Artifical Intelligence Quantified award number HR00112520026.

\appendix

\section{Proofs of theoretical results}

\subsection{Proof of Proposition~\ref{prop:consistent-delta-estimation}} \label{proof:consistent-delta-estimation}

\begin{proof}
    
    Let $\mu\in\mathcal{F}$ and let $\mu_n$ denote its empirical counterpart. We apply Theorem~2 in \cite{fournier2015rate} to find that, for some constants $C_{\mu},c_{\mu}>0$ and $t \leq 1$, we have:
    \begin{equation}
    \mathbb{P}\left\{W(\mu, \mu_n) > t\right\} \leq C_\mu  \exp \left\{- c_\mu n t^{\max (q, 2)}\right\}.
    \end{equation}
    Substituting $t = \{\log^2(n)/n\}^{1/q}$ and using the moment condition yields
    \begin{equation}
    \mathbb{P}\left\{ W(\mu, \mu_n) > \left(\frac{\log^2(n)}{n}\right)^{1/q} \right\} \leq C \exp \left\{-c \log^2(n) \right\}. \label{eq:bound_W}
    \end{equation}
    for any choice of $\mu\in\mathcal{F}$, for constants $C,c>0$. 

    Now consider the individual terms of $\bDelta- \bDeltahat$, consisting of differences between the empirical and theoretical Wasserstein distances between distributions. Let $\mu$ and $\mu^\prime$ denote these distributions and let $\mu_n$ and $\mu_n^\prime$ be their empirical counterparts. Each entry of $\bDelta- \bDeltahat$ can be written as 
\begin{equation}
(\bDelta- \bDeltahat)_{i,j} = W(\mu, \mu^\prime) - W(\mu_n, \mu^\prime_n).
\end{equation}
    Define the quantities $A_n = \max\{W(\mu, \mu^\prime), W(\mu_n, \mu_n^\prime)\}$ and $B_n = \min\{W(\mu, \mu^\prime), W(\mu_n, \mu_n^\prime)\}$. By the triangle inequality, we get:
    \begin{equation}
        A_n \leq B_n + W(\mu, \mu_n) + W(\mu^\prime, \mu_n^\prime).
    \end{equation}
    Therefore: 
    \begin{equation}
    \vert (\bDelta- \bDeltahat)_{i,j} \vert = A_n-B_n \leq W(\mu, \mu_n) + W(\mu^\prime, \mu_n^\prime).
    \end{equation}    
    It follows that, if $\vert (\bDelta- \bDeltahat)_{i,j} \vert > 2t$ for some $t>0$, we must have either $W(\mu, \mu_n) > t$, or $W(\mu^\prime, \mu^\prime_n) > t$. Thus: 
    \begin{multline}
        \mathbb{P}\left\{ \vert (\bDelta- \bDeltahat)_{i,j} \vert > 2t  \right\} \\ \leq
            \mathbb{P}\left\{W(\mu, \mu_n) > t \right\} + \mathbb{P}\left\{W(\mu^\prime, \mu^\prime_n) > t \right\}. 
    \end{multline}
    Setting $t=\{\log^2(n)/n\}^{1/q}$ and using Equation~\eqref{eq:bound_W} gives:
    \begin{multline}
       \mathbb{P}\left\{ \vert (\bDelta- \bDeltahat)_{i,j} \vert > 2\left(\frac{\log^2(n)}{n}\right)^{1/q} \right\} \\ \le 2C \exp\{-c\log^2(n)\}.
    \end{multline}
    It follows that
    \begin{equation}
    \vert (\bDelta- \bDeltahat)_{i,j} \vert \leq \left(\frac{\log^2(n)}{n}\right)^{1/q}
    \label{eq:prop1_entrywise}
    \end{equation}
    uniformly with overwhelming probability for all $m^2$ elements of $\bDelta- \bDeltahat$. Therefore, we conclude that 
    \begin{equation}
    \| \bDelta - \bDeltahat\|_F \leq m\left(\frac{\log^2(n)}{n}\right)^{1/q}
    \end{equation}
    with overwhelming probability, 
    as desired. 
\end{proof}

\subsection{Proof of Theorem~\ref{conj:psi_convergence_growing_m_and_n}} \label{proof:psi_convergence_growing_m_and_n}

\begin{proof} 
Consider the matrices $\Bmat = -\frac{1}{2} \mathbf{H}_{m} \bDelta^{\odot 2}\mathbf{H}_m$ and $ \Bmathat = -\frac{1}{2} \mathbf{H}_{m} \bDeltahat^{\odot 2} \mathbf{H}_m$. 
By Proposition~\ref{prop:consistent-delta-estimation}, there is a value $\nu=\nu(m,n,q)$ such that with high probability for large $n$, $\|\bDelta-\bDeltahat\|_F\leq \nu.$ 
Let $\kappa=\lambda_1(\Bmat)/\lambda_c(\Bmat),$ and note that Assumptions~\ref{assum:boundedset}, \ref{assum:smallest-eigenvalue} imply that this has constant order. 
Set $\beta=2^{3/2}\nu/\lambda_c(\Bmat)$. The proof of Theorem 7 in \cite{athreya2025euclidean} shows that under such assumptions, there exists a $\Wmat\in\mathbb{O}(c)$ such that
\begin{multline}
\|\Psimat-\Psimathat\Wmat\|_F\\
\leq \beta\lambda_1^{1/2}(\Bmat)\left[2+4\beta\kappa^{1/2}+(1+2\beta)\beta/2^{3/2}\right].
\end{multline}
By Assumptions~\ref{assum:boundedset} and~\ref{assum:smallest-eigenvalue}, we have that $\lambda_1^{1/2}(\Bmat)=O(\sqrt{m})$, and $\kappa=O(1)$. 
Also, using Proposition~\ref{prop:consistent-delta-estimation} and Assumption~\ref{assum:smallest-eigenvalue}, we have that 
$\beta\leq C_{\beta}\{\log^2(n)/ n \}^{1/q}$ for a constant $C_\beta>0$, for sufficiently large $n$, with overwhelming probability. 
Therefore, there exists a constant $K>0$ such that:
\begin{equation}
    \|\Psimat-\Psimathat\Wmat\|_F \leq K 
    \sqrt{m}\left(\frac{\log^2(n)}{n}\right)^{1/q},
\end{equation}
for sufficiently large $n$, with overwhelming probability, as desired.
\end{proof}

\subsection{Proof of Theorem~\ref{conj:joint-mirror}} \label{proof:joint-mirror}
\begin{proof}
Let $\hat f_{m,n}$ be defined as the Delaunay interpolant in Equation~\eqref{eq:delauney_interpolant}. We proceed to bound the estimate error as
\begin{align}
    \hat f_{m,n}(x) &- f_{m,n}(x) = \sum_{j=1}^{d+1}\lambda_j(x) \tilde f_{m,n}\{v_j(\mathcal{P}_x)\} - f_{m,n}(x) \\ &= \sum_{j=1}^{d+1} \lambda_j(x)[\tilde f_{m,n}\{v_j(\mathcal{P}_x)\} - f_{m,n}\{v_j(\mathcal{P}_x)\}] \\ &\qquad + \left( \sum_{j=1}^{d+1} \lambda_j(x) f_{m,n}\{v_j(\mathcal{P}_x)\} - f_{m,n}(x) \right).
\end{align} 
Taking the norm of both sides and applying the triangle inequality yields
\begin{multline}    
 \norm{\hat f_{m,n}(x) - f_{m,n}(x)} \leq \\ \left\|\, \sum_{j=1}^{d+1} \lambda_j(x)[\tilde f_{m,n}\{v_j(\mathcal{P}_x)\} - f_{m,n}\{v_j(\mathcal{P}_x)\}] \,\right\| \\ + \left\|\, \sum_{j=1}^{d+1} \lambda_j(x) f_{m,n}\{v_j(\mathcal{P}_x)\} - f_{m,n}(x)\,\right\|.
\end{multline}
The first term corresponds to the 
error related to the measurement of the mirror $f_{m,n}$ at the observed parameter values, whereas the second term corresponds to the error resulting from interpolation. For the first term, we make use of Theorem~\ref{conj:psi_convergence_growing_m_and_n} under Assumption~\ref{cond:m-n-growth}, which gives that $\norm{\tilde f_{m,n}\{v_j(\mathcal{P}_x)\} - f_{m,n}\{v_j(\mathcal{P}_x)\}} \leq \delta_{m,n}$ with $\delta_{m,n} \to 0$. Hence: 
\begin{multline}
\sum_{j=1}^{d+1} \lambda_j(x) \norm{\tilde f_{m,n}\{v_j(\mathcal{P}_x)\} - f_{m,n}\{v_j(\mathcal{P}_x)\}} \leq \\ \delta_{m,n} \sum_{j=1}^{d+1} \lambda_j(x) = \delta_{m,n}.
\end{multline}
For the second term, we make use of the $C$-Lipschitz continuity of $f_{m,n}$ to write
\begin{multline}
    \left\|\, \sum_{j=1}^{d+1} \lambda_j(x) f_{m,n}\{v_j(\mathcal{P}_x)\} - f_{m,n}(x) \,\right\| \leq \\ \sum_{j=1}^{d+1} \lambda_j(x) \norm{f_{m,n}\{v_j(\mathcal{P}_x)\} - f_{m,n}(x)} \leq C\varepsilon_m
\end{multline}
where $\varepsilon_m$ converges to zero by Assumption~\ref{cond:dense-m}. In particular, the assumption ensures that there exists a sequence $\varepsilon_1,\varepsilon_2,\dots$ with $\varepsilon_m \to 0$, such that $\mathrm{diam}(\mathcal{P}_\ell) \leq \varepsilon_m$ for every simplex $\mathcal{P}_\ell$ in the Delaunay triangulation of $\{x_1,\dots,x_m\}$, where $\mathrm{diam}(S) = \max_{x,x^\prime\in S}\norm{x-x^\prime}$ is the diameter of the simplex $S$. 
Combining these two bounds, we find that for all $x \in \mathcal{X}_m$: 
\begin{equation}
    \norm{\hat f_{m,n}(x) - f_{m,n}(x)} \leq \delta_{m,n} + C\varepsilon_m \to 0. 
\end{equation}
Therefore, with high probability
\begin{equation}
\sup_{x \in \mathcal{X}_m} \norm{\hat f_{m,n}(x) - f_{m,n}(x)} \to 0,
\end{equation}
which proves the result. 
\end{proof}

\subsection{Proof of Theorem~\ref{conj:consistent_param_estim}} \label{proof:consistent_param_estim}

\begin{proof}
    Recall that the parameter estimate $\hatx$ is defined as 
    \begin{equation}
        \argmin_{s\in\mathcal{X}_m} \norm{\hat f(s)- \hat f^\ast},
    \end{equation}
    where $\hat{f}^*$ is an estimate for $f(x^*)$ based on the unlabeled mirror value, and is \emph{not} the same as $\hat{f}(x^*)$, which is the evaluation of the estimated mirror at the exact (unknown) value $x^*$. In most cases we expect an exact solution to this minimization procedure to exist such that $\hat f (\hatx) = \hat f^\ast$; however, to cover the general case, we denote the difference by $\eta\in\mathbb{R}^c$, such that $\hat f (\hatx) = \hat f^\ast + \eta$. We begin by writing 
    \begin{equation}
    \hatx - x^\ast = f^{-1}\{f(\hatx)\} - f^{-1}\{f(x^{\ast})\}.
    \end{equation}
    Let $\mathcal{B}_{\delta}(x^\ast)$ denote the ball of radius $\delta$ centered at $x^\ast$ where $\delta = \norm{x^\ast - \hatx}$. The mean value theorem guarantees the existence of $\xi \in \mathcal{B}_{\delta}(x^\ast)$ such that 
    \begin{multline}
    \norm{f^{-1}\{f(\hatx)\} - f^{-1}\{f(x^{\ast})\}} \leq \\ \norm{\mathcal{J}_{f^{-1}}(\xi)[f(\hatx ) - f(x^\ast)]}.
    \end{multline}
    It follows that,
    \begin{align}
    \|&\, \hatx - x^\ast\| \leq \norm{\mathcal{J}_{f^{-1}}(\xi)[f(\hatx ) - f(x^\ast)]} \\ &= \norm{\mathcal{J}_{f^{-1}}(\xi)[f(\hatx ) - \hat f(\hatx) + \hat f(\hatx) - f(x^\ast)]} \\ &= \norm{\mathcal{J}^{-1}_{f}(\xi)[f(\hatx ) - \hat f(\hatx) + \hat f^\ast - f(x^\ast) + \eta]} \\ &\leq \frac{\norm{f(\hatx ) - \hat f(\hatx)}}{a} + \frac{\norm{\hat f^\ast - f(x^\ast)}}{a} + \frac{\norm{\eta}}{a},
    \end{align}
    where $a$ is a uniform lower bound on the smallest singular value of the Jacobian $\mathcal{J}_f(\cdot)$. 
    The first of these terms converges to zero as $m,n \to \infty$ by Theorem~\ref{conj:joint-mirror}, and the second converges to zero by Theorem~\ref{conj:psi_convergence_growing_m_and_n}. In order to bound $\norm{\eta}$, we note that 
    \begin{align}
        \argmin_{s\in \mathcal{X}_m} &\norm{\hat f(s)- \hat f^\ast} \leq \norm{\hat f(x^\ast) - \hat f^\ast} \\ &\leq \norm{\hat f(x^\ast) - f(x^\ast)} + \norm{f(x^\ast) - \hat f^\ast}.
    \end{align}
    Once again, the first term converges to zero by Theorem~\ref{conj:joint-mirror} and the second term converges to zero by Theorem~\ref{conj:psi_convergence_growing_m_and_n}. 
    Hence: 
    \begin{equation}
    \norm{\hatx - x^\ast} 
    \rightarrow 0
    \end{equation}
    for $m,n\to\infty$ with high probability, which gives the desired result. 
\end{proof}


\begin{thebibliography}{22}
\expandafter\ifx\csname natexlab\endcsname\relax\def\natexlab#1{#1}\fi
\expandafter\ifx\csname url\endcsname\relax
  \def\url#1{\texttt{#1}}\fi
\expandafter\ifx\csname urlprefix\endcsname\relax\def\urlprefix{URL: }\fi

\bibitem[{Acharyya et~al.(2025)Acharyya, Agterberg, Park and Priebe}]{acharyya2025concentration}
Acharyya, A., Agterberg, J., Park, Y., and Priebe, C.~E. (2025) Concentration bounds on response-based vector embeddings of black-box generative models.
\newblock \textit{arXiv preprint arXiv:2511.08307}.

\bibitem[{Acharyya et~al.(2024)Acharyya, Trosset, Priebe and Helm}]{acharyya2024consistent}
Acharyya, A., Trosset, M.~W., Priebe, C.~E., and Helm, H.~S. (2024) Consistent estimation of generative model representations in the data kernel perspective space.
\newblock \textit{arXiv preprint arXiv:2409.17308}.

\bibitem[{Athreya et~al.(2025)Athreya, Lubberts, Park and Priebe}]{athreya2025euclidean}
Athreya, A., Lubberts, Z., Park, Y., and Priebe, C. (2025) {Euclidean Mirrors and Dynamics in Network Time Series}.
\newblock \textit{Journal of the American Statistical Association}, \textbf{120}, 1025--1036.

\bibitem[{de~Boor(1962)}]{deBoor62}
de~Boor, C. (1962) Bicubic spline interpolation.
\newblock \textit{Journal of Mathematics and Physics}, \textbf{41}, 212--218.

\bibitem[{Chen and Xu(2004)}]{Chen04}
Chen, L. and Xu, J.-C. (2004) Optimal {Delaunay} triangulations.
\newblock \textit{Journal of Computational Mathematics}, \textbf{22}, 299--308.

\bibitem[{Chewi et~al.(2025)Chewi, Niles-Weed and Rigollet}]{chewi2025statistical}
Chewi, S., Niles-Weed, J., and Rigollet, P. (2025) \textit{Statistical optimal transport}.
\newblock Springer.

\bibitem[{Eilers and Marx(1996)}]{eilers1996flexible}
Eilers, P.~H. and Marx, B.~D. (1996) {Flexible smoothing with B-splines and penalties}.
\newblock \textit{Statistical Science}, \textbf{11}, 89--121.

\bibitem[{Fournier and Guillin(2015)}]{fournier2015rate}
Fournier, N. and Guillin, A. (2015) On the rate of convergence in {Wasserstein} distance of the empirical measure.
\newblock \textit{Probability Theory and Related Fields}, \textbf{162}, 707--738.

\bibitem[{Gillette and Kur(2024)}]{Gillette24}
Gillette, A. and Kur, E. (2024) Algorithm 1049: The delaunay density diagnostic.
\newblock \textit{ACM Transactions on Mathematical Software}, \textbf{50}.

\bibitem[{Helm et~al.(2025)Helm, Acharyya, Park, Duderstadt and Priebe}]{Helm2025}
Helm, H., Acharyya, A., Park, Y., Duderstadt, B., and Priebe, C. (2025) Statistical inference on black-box generative models in the data kernel perspective space.
\newblock In \textit{Findings of the Association for Computational Linguistics: ACL 2025} (eds. W.~Che, J.~Nabende, E.~Shutova and M.~T. Pilehvar), 3955--3970. Association for Computational Linguistics.

\bibitem[{Kahng et~al.(2025)Kahng, Tenney, Pushkarna, Liu, Wexler, Reif, Kallarackal, Chang, Terry and Dixon}]{Kahng25}
Kahng, M., Tenney, I., Pushkarna, M., et~al. (2025) Llm comparator: Interactive analysis of side-by-side evaluation of large language models.
\newblock \textit{IEEE Transactions on Visualization and Computer Graphics}, \textbf{31}, 503--513.

\bibitem[{Kruskal(1964)}]{kruskal1964multidimensional}
Kruskal, J.~B. (1964) Multidimensional scaling by optimizing goodness of fit to a nonmetric hypothesis.
\newblock \textit{Psychometrika}, \textbf{29}, 1--27.

\bibitem[{Nussbaum et~al.(2025)Nussbaum, Morris, Mulyar and Duderstadt}]{nussbaum2025nomic}
Nussbaum, Z., Morris, J.~X., Mulyar, A., and Duderstadt, B. (2025) {Nomic Embed: Training a Reproducible Long Context Text Embedder}.
\newblock \textit{Transactions on Machine Learning Research}.

\bibitem[{Panaretos and Zemel(2019)}]{panaretos2019statistical}
Panaretos, V.~M. and Zemel, Y. (2019) Statistical aspects of {Wasserstein} distances.
\newblock \textit{Annual Review of Statistics and Its Application}, \textbf{6}, 405--431.

\bibitem[{Schumaker(2007)}]{Schumaker_2007}
Schumaker, L. (2007) \textit{Spline Functions: Basic Theory}.
\newblock Cambridge Mathematical Library. Cambridge University Press.

\bibitem[{Stone(1994)}]{Stone94}
Stone, C.~J. (1994) The use of polynomial splines and their tensor products in multivariate function estimation.
\newblock \textit{The Annals of Statistics}, \textbf{22}, 118--171.

\bibitem[{Tan et~al.(2024)Tan, Zeng, Tian, Liu, Yin and Jiang}]{tan2024}
Tan, Z., Zeng, Q., Tian, Y., et~al. (2024) Democratizing large language models via personalized parameter-efficient fine-tuning.
\newblock In \textit{Proceedings of the 2024 Conference on Empirical Methods in Natural Language Processing} (eds. Y.~Al-Onaizan, M.~Bansal and Y.-N. Chen), 6476--6491. Association for Computational Linguistics.

\bibitem[{Tao and Vu(2010)}]{tao2010random}
Tao, T. and Vu, V. (2010) Random matrices: Universality of local eigenvalue statistics up to the edge.
\newblock \textit{Communications in Mathematical Physics}, \textbf{298}, 549--572.

\bibitem[{Woźniak et~al.(2024)Woźniak, Koptyra, Janz, Kazienko and Kocoń}]{wozniak24}
Woźniak, S., Koptyra, B., Janz, A., Kazienko, P., and Kocoń, J. (2024) Personalized large language models.
\newblock In \textit{2024 IEEE International Conference on Data Mining Workshops (ICDMW)}, 511--520.

\bibitem[{Zhang et~al.(2025)Zhang, Rossi, Kveton, Shao, Yang, Zamani, Dernoncourt, Barrow, Yu, Kim, Zhang, Gu, Derr, Chen, Wu, Chen, Wang, Mitra, Lipka, Ahmed and Wang}]{zhang2025personalization}
Zhang, Z., Rossi, R.~A., Kveton, B., et~al. (2025) Personalization of large language models: A survey.
\newblock \textit{Transactions on Machine Learning Research}.

\bibitem[{Zheng et~al.(2023)Zheng, Chiang, Sheng, Zhuang, Wu, Zhuang, Lin, Li, Li, Xing, Zhang, Gonzalez and Stoica}]{Zheng23}
Zheng, L., Chiang, W.-L., Sheng, Y., et~al. (2023) {Judging LLM-as-a-judge with MT-bench and Chatbot Arena}.
\newblock In \textit{Proceedings of the 37th International Conference on Neural Information Processing Systems}. Curran Associates Inc.

\bibitem[{Zhu and Ghodsi(2006)}]{Zhu06}
Zhu, M. and Ghodsi, A. (2006) Automatic dimensionality selection from the scree plot via the use of profile likelihood.
\newblock \textit{Computational Statistics \& Data Analysis}, \textbf{51}, 918--930.

\end{thebibliography}
\end{document}